\newcommand{\RN}[1]{%
  \textup{\uppercase\expandafter{\romannumeral#1}}%
}
\newcommand{\rn}[1]{%
  \textup{\expandafter{\romannumeral#1}}%
}
\newtheorem{theorem}{Theorem}
\newtheorem*{lemma*}{Lemma}
\newtheorem*{corollary*}{Corollary}
\newtheorem{remark}{Remark}
\theoremstyle{definition}
\newcommand{\dee}{\partial}
\journal{Journal of Theoretical Biology}
\begin{document}

\begin{frontmatter}

%% Title, authors and addresses

%% use the tnoteref command within \title for footnotes;
%% use the tnotetext command for theassociated footnote;
%% use the fnref command within \author or \affiliation for footnotes;
%% use the fntext command for theassociated footnote;
%% use the corref command within \author for corresponding author footnotes;
%% use the cortext command for theassociated footnote;
%% use the ead command for the email address,
%% and the form \ead[url] for the home page:
%% \title{Title\tnoteref{label1}}
%% \tnotetext[label1]{}
%% \author{Name\corref{cor1}\fnref{label2}}
%% \ead{email address}
%% \ead[url]{home page}
%% \fntext[label2]{}
%% \cortext[cor1]{}
%% \affiliation{organization={},
%%            addressline={}, 
%%            city={},
%%            postcode={}, 
%%            state={},
%%            country={}}
%% \fntext[label3]{}

\title{Inferring Exocytosis Profiles from Cell Shapes Using a Dual-Configuration Model of Walled Cell Tip Growth}%% Article title

% use optional labels to link authors explicitly to addresses:
% \author[label1,label2]{Min Wu}
% \affiliation[label1]{organization={Department of Mathematical Sciences, Worcester Polytechnic Institute},
%             addressline={},
%             city={},
%             postcode={},
%             state={},
%             country={}}
%
% \affiliation[label2]{organization={Center for Computational Biology, Flatiron Institute},
%             addressline={},
%             city={},
%             postcode={},
%             state={},
%             country={}}
%\author[add1,add2]{Min Wu}%\corref{Email address}
%\ead{englier@gmail.com}
%
%%\cortext[cor1]{englier@gmail.com}
%\address[add1]{Department of Mathematical Sciences, Worcester Polytechnic Institute}
%\address[add2]{Center for Computational Biology, Flatiron Institute}
%%\address[add3]{Department of Interests}
%\cortext[CorrespondingAuthor]{Corresponding author}

%\author{Kamryn Spinelli}
%\affiliation{Department of Mathematics, Brandeis University, Waltham, MA 02453, USA}
%\author{Chaozhen Wei}
%\affiliation{Department of Mathematical Sciences, Worcester Polytechnic Institute, Worcester, MA 01605, USA}
%\author{Luis Vidali}
%\affiliation{Department of Biology and Biotechnology}
%\affiliation{Bioinformatics and Computational Biology Program, Worcester Polytechnic Institute, Worcester, MA 01605, USA}
%\author{Min Wu}
%\email{englier@gmail.com}
%\affiliation{Department of Mathematical Sciences}
%\affiliation{Bioinformatics and Computational Biology Program, Worcester Polytechnic Institute, Worcester, MA 01605, USA}
\author[add3]{Kamryn Spinelli}

\author[add1]{Chaozhen Wei}
\author[add4,add5]{Luis Vidali}

\author[add1,add5]{Min Wu}\corref{Email address}
\ead{englier@gmail.com}

%\cortext[cor1]{englier@gmail.com}
\address[add3]{Department of Mathematics, Brandeis University}
\address[add1]{Department of Mathematical Sciences, Worcester Polytechnic Institute}
%\address[add2]{Center for Computational Biology, Flatiron Institute}
\address[add4]{Department of Biology and Biotechnology, Worcester Polytechnic Institute}
\address[add5]{Bioinformatics and Computational Biology Program, Worcester Polytechnic Institute}
%\address[add6]{School of Mathematical Sciences, University of Electronic Science and Technology of China}

%\address[add3]{Department of Interests}
\cortext[CorrespondingAuthor]{Corresponding author}

%% Abstract

\begin{abstract}
Tip growth in filamentous cells, such as root hairs, moss protonemata, and fungal hyphae, depends on coordinated cell wall extension driven by turgor pressure, wall mechanics, and exocytosis. We introduce a dual-configuration model that incorporates both turgid and unturgid states to describe cell wall growth as the combined effect of elastic deformation and irreversible extension. This framework infers exocytosis profiles directly from cell morphology and elastic stretches, formulated as an initial value problem based on the self-similarity condition. Applying the model to {\it Medicago truncatula} root hairs, moss {\it Physcomitrium patens} protonemata, and hyphoid-like shapes, we find that exocytosis peaks at the tip in tapered cells but shifts to an annular region away from the apex in flatter-tip cells beyond a threshold. The model generalizes previous fluid models and provides a mechanistic link between exocytosis distribution and cell shape, explaining observed variations in tip-growing cells across species.
\end{abstract}

\begin{highlights}
\item A dual-configuration model describes tip growth as a combination of elastic deformation and irreversible extension.
\item The model generalizes fluid-based approaches and explains exocytosis pattern variability across species.
\item It provides a proof of concept for predicting exocytosis profiles from elastic strains and cell configurations.
\item Exocytosis peaks at the apex in tapered cells and shifts to an annular region in flatter-tip cells.
\end{highlights}

%% Keywords
\begin{keyword}
cell wall  \sep tip growth  \sep morphogenesis \sep elasticity \sep Lockhart extensibility
%% PACS codes here, in the form: \PACS code \sep code

%% MSC codes here, in the form: \MSC code \sep code
%% or \MSC[2008] code \sep code (2000 is the default)

\end{keyword}
%\maketitle

\end{frontmatter}

\section{Introduction}\label{sec:intro}
 In plants and fungi, the cell wall boundary defines the cell morphology and sustains the internal turgor pressure resulting from osmosis \cite{cosgrove-2005}. In tip-growing filamentous cells, such as in root hairs \cite{galway2006root}, moss protonemata \cite{bibeau2021quantitative}, pollen tubes \cite{hepler2013control}, and fungal hyphae \cite{bartnicki2000mapping}, which are one cell wide, cell walls need to extend along one direction to optimize the migration speed. How this growth mode is regulated is of critical importance for plant and fungus development and survival.

In the absence of external environmental perturbations, the filamentous cell wall in its apical domain remains approximately axisymmetric and extends along its axis of symmetry \cite{dumais-2004,bibeau2021quantitative,bartnicki2000mapping}. Many filamentous cells elongate while preserving their apical domain shape, a phenomenon known as self-similar tip growth \cite{self-similar-tip-growth-goriely-tabor}. For over a century, researchers have tracked the relative cell-wall surface observable strain rates $\dot{\epsilon}_s$ and $\dot{\epsilon}_\theta$ along the meridional and circumferential directions, respectively, by measuring the rate of tangential (meridional) velocity $v_s=\partial s/\partial t$ and local cell radius $r$ via \cite{Reinhardt,castle1958topography,shaw2000cell,dumais-2004} (see Fig.\ref{fig:growth-components}a):\begin{eqnarray} 
\label{dotrate} 
\dot{\epsilon}_s=\frac{\partial v_s}{\partial s} \text{ } \text{ and }\dot{\epsilon}_\theta=\frac{v_s}{r}\frac{\partial r}{\partial s} \end{eqnarray} where $s$ is the meridional distance from the tip. In many systems, such as {\it Chara} rhizoids \cite{hejnowicz-1977} and fission yeast \cite{fission-yeast}, experimental tracking of wall surface expansion shows that the highest strain rates, $\dot{\epsilon}_s$ and $\dot{\epsilon}_\theta$, occur at the tip. However, in some systems, such as root hair growth, the highest  strain rates are observed in an annular region around the tip \cite{shaw2000cell,dumais-2004}. It is not clear what factors are responsible for these qualitatively different wall extension profiles. 
   \begin{figure}[h]
    \centering
    \includegraphics[width=0.8\linewidth]{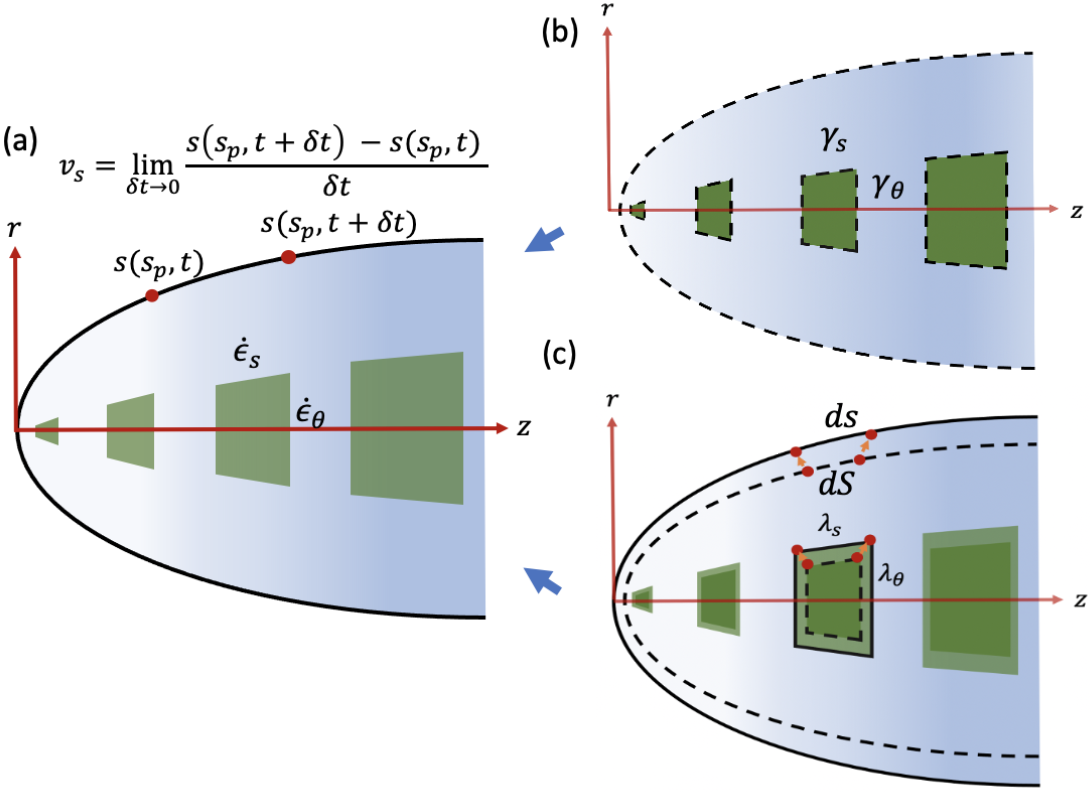}
    \caption{ Illustration of the dual-configuration model. 
(a) The observed surface deformation (in the turgid configuration) reflects both irreversible wall surface-area growth and reversible elastic stretching. 
(b) We introduce a relaxed (unturgid) configuration to represent the reference surface resulting from wall surface-area growth alone, prior to elastic loading. 
(c) Turgor pressure imposes elastic deformation on this relaxed surface to yield the observed (turgid) geometry. 
The local strain rates, \( \dot{\epsilon}_s \) and \( \dot{\epsilon}_\theta \), observed on the turgid wall are thus interpreted as the result of combined contributions from growth-driven deformation of the relaxed configuration—quantified by \( \gamma_s \) and \( \gamma_\theta \)—and elastic stretch ratios  due to turgor, given by \( \lambda_s \) and \( \lambda_\theta \). The origin of the r-axis is fixed at the centerline of the cell, while the origin of the z-axis is arbitrary and does not affect the cell configuration.}
    \label{fig:growth-components}
\end{figure}

Wall extension is governed by multiple interconnected processes \cite{dumais2021mechanics}. First, turgor pressure induces tension on the cell wall surface, providing mechanical drive or cue for expansion. Second, the wall matrix undergoes local network rearrangements in response to these tension, facilitated by wall-loosening chemical factors \cite{cosgrove2000loosening,cosgrove-2005}. In fast-growing pollen tubes, wall loosening is associated with the turnover and remodeling of crosslinks in the polymeric network, involving enzymes such as pectin methylesterases \cite{rojas-pollen-tube}. In slower-growing root hairs and moss protonemata, wall loosening is thought to involve expansins, which likely disrupt noncovalent bonding between laterally aligned polysaccharides such as cellulose \cite{cosgrove2024plant}. Third, new wall materials and wall-loosening factors are delivered via exocytosis. In root hairs and moss protonemata, some wall components—particularly cellulose—are also synthesized directly at the plasma membrane by cellulose synthase complexes. However, without concurrent delivery of wall-loosening agents via exocytosis, such synthesis may lead primarily to wall thickening rather than expansion.

Many previous theoretical approaches are based on Lockhart’s wall extensibility model \cite{Lockhart1965}, which directly links the observable  strain rates ($\dot{\epsilon}_s$ and $\dot{\epsilon}_\theta$) to cell wall mechanics. Consequently, the spatial patterning of ``wall extensibility" has been used to explain variations in cell morphology during self-similar growth. However, these approaches often rely on simulating cell wall flow until the cell shape reaches a steady state,  and how the wall elastic strains respond to the change of turgor pressure is not captured. Additionally, they are typically coupled with wall-extension tracking experiments, which require real-time tracking of material points on the cell wall. 

Recent advances in experimental techniques \cite{fission-yeast,davi-2019,deng2022inferring,chevalier2023cell} have enabled the measurement of elastic strains and  elastic moduli in tip-growing cell walls by tracking cell deformation between turgid and unturgid configurations at the tip and lateral body regions. These studies reveal that solid properties adapt to cell width in fission yeast \cite{davi-2019} and can exhibit significant gradients in hyphae \cite{chevalier2023cell}. Thus, the understanding of the distribution of elastic properties of the wall during wall expansion is timely.

Notably, these experiments are faster than wall-flow tracking methods, as measuring elastic deformation takes only a few seconds. Since these experiments involve two distinct cell wall configurations (turgid and unturgid), a modeling approach that explicitly incorporates both cell states can provide new insights into tip growth.  In addition, capturing two configurations clearly define which part of the wall expansion is due to elastic deformation, thus reversible, and which part is the irreversible growth due to material addition and remodeling. Despite the added complexity, such an approach holds promise for bridging experimental observations with theoretical predictions.

In this study, we introduce a dual-configuration tip growth model that describes cell wall growth as the combined effect of elastic deformation and irreversible extension, driven by local exocytosis and elastic strains.  This approach shares the same spirit as the theory of growth-elasticity developed over the past three decades \cite{rodriguez1994stress,skalak1996compatibility}, in that it considers two configurations at any given time: a grown configuration prior to elastic deformation and a current configuration after the elastic deformation. However, it also differs in a fundamental and more straightforward way. In our framework, the two global configurations are well-defined and distinct, whereas in classical growth-elasticity theory, the intermediate grown (irreversibly changed) configuration is typically defined only locally via growth tensors. A global representation of this configuration is generally intractable—hence the use of terms such as incompatible growth and virtual configuration. 

Leveraging the self-similarity condition \cite{hejnowicz-1977}, we formulate a well-defined initial value problem to infer the exocytosis profile from cell shape data and apply the framework across multiple cell species. Furthermore, we connect the dual-configuration model with previous models. We demonstrate that the two-configuration model generalizes previous fluid models, which emerge as approximations when elastic deformation is small. We apply the model to cell outline data of {\it Medicago truncatula} root hairs and moss {\it Physcomitrium patens} protonemata, predicting that the exocytosis profile peaks in an annular region when the cell tip becomes less tapered beyond a critical threshold.

\section{The model}
Treating the cell wall as a thin shell of revolution generated by its meridian, we assume that at any time point $t$, one material point $s_p$ along the meridian is mapped to two configurations, one turgid state $\left(s(s_p,t),r(s_p,t)\right)$ and one unturgid state $\left(S(s_p,t),{R}(s_p,t)\right)$ where $s$ and $S$ are the meridional distances from the tip end and $r$ and ${R}$ are the local cell  radius in the turgid and unturgid states, respectively.
 We describe the growth of the unturgid surface path by 
 %{\color{red}
\begin{eqnarray}
    \label{eq1}
    \frac{\dee S}{\dee t}\bigg|_{s_p} = \int_0^{S} \gamma_s\ dw \text{ } \text{ and } \text{ } \frac{\dee {R}}{\dee t}\bigg|_{s_p} = \gamma_\theta {R}
\end{eqnarray}
%}
where $\gamma_s$ and $\gamma_\theta$,  with units of inverse time, are the relative rates of extension of the unturgid cell wall surface in the meridional and circumferential directions respectively (Fig. \ref{fig:growth-components}b). Henceforth, we name them growth strain rates in the unturgid configuration.  We note that by defining the tangential (meridional) velocity $V_S =  \frac{\dee S}{\dee t}|_{s_p}$ in the unturgid configuration, the first part of the equation follows by integrating the relation $\gamma_s = \frac{\partial V_S}{\partial S}$ from the pole to the material point $s_p$ located at arclength 
$S$. The second part of the equation follows by combining the relation $\gamma_\theta = \frac{V_S}{ R}\frac{\partial R}{\partial S}=\frac{1}{R}\frac{\partial S}{ \partial t}\frac{\partial R}{\partial S}=\frac{1}{ R}\frac{\partial R}{\partial t}$.

%The first part of the equation comes from the fact that the rate at which a point of wall material at arclength coordinate ${\color{red}S}$ moves away from the tip end is given by summing the local meridional extensions from the tip end to that point. The second part of the equation comes from the fact that the circumference of the circle traced by rotating the material point $s_p$ around the axis of revolution is $\ell_r = 2 \pi {R}$. It follows that $\dee \ell_r / \dee t = 2\pi (\dee {R} / \dee t)=\gamma_\theta\ell_r=2\pi(\gamma_\theta  {R})$. 

The unturgid and turgid configurations are connected through the elastic stretches along the meridional and circumferential directions, respectively (Fig.~\ref{fig:growth-components}c). These stretches are defined as:  
\begin{eqnarray}
\label{stretches}
\lambda_s = \frac{\partial s}{\partial {S}}, \quad \lambda_\theta = \frac{r}{{R}},
\end{eqnarray}
where $\lambda_s(S)$ represents the meridional stretch ratio and $\lambda_\theta(S)$ represents the circumferential stretch ratio.
That said, we describe the cell wall surface area growth coming from two contributions: 1)  growth strain rates of the unturgid surface patch $\gamma_s$ and $\gamma_\theta$; and 2) the stretch ratio due to elastic deformations $\lambda_s$ and $\lambda_\theta$. We note that in \cite{fission-yeast}, two surface configurations are introduced implicitly by connecting the meridional flow fields
$ v_s := \frac{\partial s}{\partial t} |_{s_p}$ and $V_{S} := \frac{\partial {S}}{\partial t} |_{s_p}$, 
in the turgid and unturgid configurations, respectively. However, the relation $v_s = \lambda_s V_{S}$ from \cite{fission-yeast} is valid only in the self-similar growth regime. Given the model from \cite{fission-yeast} is based on flow descriptions, we denote it as a dual-flow model.  Our model provides a more general formulation by explicitly modeling the coordinates themselves rather than their time derivatives. For a detailed comparison, see \ref{Appendix_B}.

 \subsection{Cell wall mechanics}
 
To describe the cell wall mechanics, both wall-surface tension and bending moment may be considered \cite{boudaoud2003growth,fission-yeast}. For simplicity, we suppose that the force $\vec{F}$ on the cell wall is dominated by the wall-surface tensions  $\sigma_s(S)$ and $\sigma_\theta(S)$ —with units of force per length— along meridional and circumferential direction respectively, and the turgor pressure $P$: 
\begin{eqnarray} \label{eqn:local-force-balance}
 \vec{F}:=\frac{\partial (\sigma_s \hat{t}) }{\partial s} + \frac{\sigma_s - \sigma_\theta}{r} \hat{r} - \frac{\sigma_s \sin\alpha}{r} \hat{n} + P \hat{n}=0,
    %\frac{d(\sigma_s \hat{t})}{ds} + \frac{\sigma_s - \sigma_\theta}{r(s)} \hat{r} - %\frac{\sigma_s \sin\alpha(s)}{r(s)} \hat{n} + P \hat{n},
\end{eqnarray}
where $\hat{r}$ is the unit vector along the cell radius, $\hat{t}$ is the local unit tangent, $\hat{n}$ is the local unit normal vector towards the cell exterior, and $\alpha$ is the angle from $\hat{t}$ to $\hat{r}$.  We derive this vector equation from $\nabla_s\cdot\boldsymbol\sigma +P\hat{n}=0$ \cite{pozrikidis1992boundary} where $\nabla_s = \hat{t} \frac{\dee}{\dee s} + \frac{\hat{\theta}}{r} \frac{\dee}{\dee \theta}$ and $\boldsymbol\sigma = \sigma_s \hat{t} \otimes \hat{t} + \sigma_\theta \hat{\theta} \otimes \hat{\theta}$, $\theta$ being the azimuth with respect to the axis of symmetry. See \ref{Appendix_force} for the detail. One can check that the widely used Young-Laplace Law \cite{dumais-2006,Camps2009,Camps2012,rojas-pollen-tube,deJong2020,goriely-tabor-jtb,self-similar-tip-growth-goriely-tabor,Fayant2010}: $\kappa_s \sigma_s + \kappa_\theta \sigma_\theta = P$ and $\kappa_\theta \sigma_s = P/2$ (where $\kappa_s$  and $\kappa_\theta$ —with units of inverse length—  are the curvatures along the meridional and circumferential directions, respectively) can be derived from this equation.

To relate the wall-surface tension with the elastic stretches, we first consider the nonlinear elastic theories \cite{self-similar-tip-growth-goriely-tabor}. In particular, we consider the constitutive law
\begin{eqnarray}
\sigma_{s} &=& \frac12\mu_h(1/(\lambda_{\theta})^2-1/(\lambda_{s})^2)+K_h(\lambda_{s}\lambda_{\theta}-1)\label{rel1}\\
\sigma_{\theta} &=& \frac12\mu_h(1/(\lambda_{s})^2-1/(\lambda_{\theta})^2)+K_h(\lambda_{s}\lambda_{\theta}-1)\label{rel2}
\end{eqnarray}
where $K_h=h\times K$ and $\mu_h=h\times\mu$ are the rescaled bulk modulus and shear modulus, respectively, $h$ being the cell wall thickness in the thin-shell approximation. $K_h$ and $\mu_h$ have units of force per length, whereas $K$ and $\mu$ have units of force per unit area (i.e., force per length squared).  In addition, we will also consider the linear law from \cite{fission-yeast}:
\begin{eqnarray}
\label{linear_law}
(\lambda_s - 1) = \frac{\sigma_s - \nu \sigma_\theta}{E_h}, \quad (\lambda_\theta - 1) = \frac{\sigma_\theta - \nu \sigma_s}{E_h}.
\end{eqnarray}
where $E_h = h \times E$ and $\nu$ are the rescaled Young’s modulus and Poisson’s ratio, respectively. These two laws are equivalent up to the first order of small strains, and $\mu_h=\frac{E_h}{2(1+\nu)}$  and $K_h=\frac{E_h}{2(1-\nu)}$.

At the tip end of the apical domain, we have the condition
  \begin{eqnarray}
    \label{bc1}
r|_{{S}=0}=0  \text{ } \text{ and } \text{ }  \vec{F}\cdot\hat{z}|_{{S}=0}=0
\end{eqnarray}
due to fixing the tip along $r$-direction and force-balance along the long axis $z$-direction $\hat{z}$. Similarly, at the shank boundary ${S}=s^b$, we have 
  \begin{eqnarray}
  \label{bc2}
z|_{{S}=s^b}=z^b  \text{ } \text{ and } \text{ }  \vec{F}\cdot\hat{r}|_{{S}=s^b}=0
\end{eqnarray}
due to no displacement of the shank boundary along the $z-$direction and force-balance along the $r-$direction.  The quantity \( z^b \) serves only as a reference position and does not affect the cell configuration. Notice Eq.(\ref{stretches})-(\ref{bc2}) is an elastostatic problem even if the tip growth has not settled in the self-similar growth state. It only involves solving the turgid cell configuration given the unturgid one.

The essential difference between our work and previous modeling approaches (summarized in Sec.~\ref{comparision_sec}) is that elastic stretch is explicitly modeled based on the two configurations and can be solved with or without wall extension. This is not the case in previous models, where the formulation requires additional modifications for model closure when wall extension ceases (i.e., no flow). Moreover, our model is compatible with recent experimental techniques \cite{davi-2019,deng2022inferring,chevalier2023cell}, which enable the measurement of elastic moduli in tip-growing cell walls by tracking cell deformation between turgid and unturgid configurations.

\subsection{The local growth strain rates of the unturgid configuration}
We assume that the local meridional and circumferential growth strain rates, $\gamma_s(s_p,t)$ and $\gamma_\theta(s_p,t)$ in the unturgid configuration, are proportional to the corresponding elastic strains, $(\lambda_s - 1)$ and $(\lambda_\theta - 1)$, consistent with previous works \cite{fission-yeast}. Additionally, we assume that the growth strain rates vanish wherever the elastic strain is zero or negative. Lastly, they are assumed to be proportional to a non-dimensional chemical concentration field, $\gamma$—that is, a concentration normalized by a reference level—which influences cell wall extension. In \cite{rojas-pollen-tube}, $\gamma$ represents the concentration of de-esterified free pectin residues. More recently, it has been associated with the exocytosis profile \cite{fission-yeast,ohairwe2024fitness}. Henceforth, we consider $\gamma(S)$ as the non-dimensional exocytosis profile.  Combining these assumptions, we obtain:
\begin{eqnarray}
\label{strain-promoted}
   \gamma_s =\beta \cdot \gamma\cdot (\lambda_s-1) \cdot 1_{\{\lambda_s>1\}}, \quad  
   \gamma_\theta =\beta \cdot \gamma\cdot (\lambda_\theta-1) \cdot 1_{\{\lambda_\theta>1\}},
\end{eqnarray}
where $1_{\chi}$ is the characteristic function of the set $\chi$,  defined by
\[
{1}_\chi(x) = 
\begin{cases}
1, & \text{if } x \in \chi, \\
0, & \text{otherwise}.
\end{cases}
\]

The parameter $\beta$—in the unit of inverse time—  determines the baseline rate, while the exocytosis profile $\gamma$ modulates the local growth strain rate in conjunction with the elastic strains.  
 The key difference from previous works is that $\gamma_s$ and $\gamma_\theta$ are defined in the unturgid configuration rather than the turgid one. While $\gamma_s$ and $\gamma_\theta$ cannot be directly tracked from the turgid configuration in a single experiment, they can be inferred in reproducible self-similar growth regimes by combining bead tracking and osmotic shock measurements across samples. More importantly, defining these quantities in the unturgid configuration offers a clear separation between irreversible wall growth and reversible elastic deformation caused by turgor pressure—a distinction that becomes particularly important near the tip, where elastic stretches are spatially non-uniform and may obscure the underlying growth pattern.

\subsection{Self-similar growth and the spatial profile of $\gamma$}
In self-similar tip growth (the apical-domain shape is in a steady state), the local growth strain rates of the material points need to satisfy a constraint  $  \frac{\partial {R}}{\partial t}\big|_{s_p} /\frac{\partial {S}}{\partial t }\big|_{s_p} ={d{R}}/{d{S}}$ where ${R}({S})$ describes the steady-state outline of the unturgid configuration. Notably, this is a purely geometric constraint: it ensures that the rate of dimensional change is compatible with the local geometry of the surface, under the assumption that expansion occurs only tangentially—i.e., material points move along the surface, and the surface itself does not shift in the normal direction.  We name this constraint the self-similarity condition, which has been derived previously in \cite{hejnowicz-1977},  where the relation is derived in the mind of turgid configuration. Here we apply it on the unturgid configuration, and  with Eqs.(\ref{eq1}) and (\ref{strain-promoted}), we derive
\begin{eqnarray} \label{eqn:self-similarity-condition}
    { \gamma(S)} = \frac{d{R} / d{S}}{{R}(\lambda_\theta - 1)\cdot1_{\{\lambda_\theta>1\}}}{\int_0^{{S}} (\lambda_s - 1)\cdot1_{\{\lambda_s>1\}} \gamma(w)\ dw}
\end{eqnarray}
which enables us to solve $\gamma$ given the steady-state cell geometry ${R}({S})$ and elastic stretch ratios $\lambda_s$ and $\lambda_\theta$. Notice that if the unturgid configuration is in a self-similar state and the elastic stretches are in steady state (i.e., constant in time but potentially varying in space), then the turgid configuration is also in a self-similar state:
\begin{equation*}
\frac{\left.\frac{\partial r}{\partial t}\right|_{s_p}}{\left.\frac{\partial s}{\partial t}\right|_{s_p}} = \frac{\lambda_\theta \left.\frac{\partial R}{\partial t}\right|_{s_p}}{\lambda_s \left.\frac{\partial S}{\partial t}\right|_{s_p}} = \left( \frac{\lambda_\theta}{\lambda_s} \right) \left( \frac{dR}{dS} \right) = \frac{dr}{ds}.
\end{equation*}

That is, the compatibility of local strain rates with the geometry is preserved under elastic deformation when the stretches are time-invariant. 
 
Equation~(\ref{eqn:self-similarity-condition}) is difficult to analyze directly due to the presence of an integral and the degeneracy at \( S = 0 \), where it reduces to the trivial identity \( \gamma(0) = \gamma(0) \). However, if we assume that \( \lambda_s > 1 \), \( \lambda_\theta > 1 \) (which removes the characteristic functions), and that \( \gamma(S) \) is smooth, then we can differentiate both sides of the equation with respect to \( S \) to obtain an expression of the form
\[
\gamma'(S) = f(S)\gamma(S).
\]
This reveals that \( \gamma(S) \) is uniquely determined as the solution to a first-order initial value problem, with the initial condition \( \gamma(0) = \overline{\gamma} > 0 \), which sets the scale of the exocytosis profile.
 See $f({S})$ and the detailed analysis in \ref{Appendix_A}.  Importantly, one can show that $ \lim_{{S} \to 0+} f({S}) = (\lambda_s'(0) - 2 \lambda_\theta'(0))/({\lambda_\theta-1})$ and prove the following Theorem. 

\begin{theorem} \label{thm:local extrema}
    Given $\lambda_\theta$, $\lambda_\theta'$,  $\lambda_s'$, $\gamma$ continuous and $\lambda_\theta(0)>1$ and $\gamma(0)>0$, the secretion distribution $\gamma$ has a local maximum at the tip ${S}=0$ if
    \begin{eqnarray}
      {\lambda_s'(0) - 2 \lambda_\theta'(0)}<0
    \end{eqnarray}
   and has a local minimum at the tip point if 
   \begin{eqnarray}
      {\lambda_s'(0) - 2 \lambda_\theta'(0)}>0.
    \end{eqnarray}
\end{theorem}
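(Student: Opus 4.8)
The plan is to reduce the statement to the first-order linear ODE for $\gamma$ derived in \ref{Appendix_A}. First I would recall that, on a one-sided neighbourhood $(0,\delta_0)$ of the tip on which $\lambda_s>1$ and $\lambda_\theta>1$ — which, by continuity, holds for $\lambda_\theta$ since $\lambda_\theta(0)>1$, and which is assumed for $\lambda_s$ as in the paragraph preceding the theorem, so that both characteristic functions in Eq.~(\ref{eqn:self-similarity-condition}) equal $1$ — differentiating Eq.~(\ref{eqn:self-similarity-condition}) gives $\gamma'(S)=f(S)\gamma(S)$, with $f$ continuous on $(0,\delta_0)$ and $\lim_{S\to 0^+}f(S)=\frac{\lambda_s'(0)-2\lambda_\theta'(0)}{\lambda_\theta(0)-1}=:L$, so that $f$ extends continuously to $[0,\delta_0)$ and is in particular bounded near the tip. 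Solving this linear equation on $(0,\delta_0)$ and letting the lower integration limit tend to $0$, using the continuity of $\gamma$ at the tip and $\gamma(0)=\overline{\gamma}>0$, yields the closed form
\[
\gamma(S)=\overline{\gamma}\,\exp\!\left(\int_0^S f(w)\,dw\right),\qquad S\in[0,\delta_0),
\]
with finite integral. Hence $\gamma>0$ throughout $[0,\delta_0)$, and the sign of $\gamma(S)-\gamma(0)$ coincides with the sign of $\int_0^S f(w)\,dw$.

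Next I would read off the two cases from the sign of $L$. Since $\lambda_\theta(0)>1$ the denominator $\lambda_\theta(0)-1$ is strictly positive, so $\sgn L=\sgn\!\bigl(\lambda_s'(0)-2\lambda_\theta'(0)\bigr)$. If $\lambda_s'(0)-2\lambda_\theta'(0)<0$ then $L<0$, and by continuity of $f$ at $0$ there is $\delta\in(0,\delta_0)$ with $f<0$ on $(0,\delta)$; then $\int_0^S f(w)\,dw<0$ for every $S\in(0,\delta)$, so $\gamma(S)<\gamma(0)$ there, i.e.\ $\gamma$ has a (strict, one-sided) local maximum at $S=0$. If $\lambda_s'(0)-2\lambda_\theta'(0)>0$ then symmetrically $L>0$, $f>0$ on some $(0,\delta)$, $\int_0^S f(w)\,dw>0$, and $\gamma(S)>\gamma(0)$ there, so $\gamma$ has a local minimum at the tip. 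This is the whole theorem; no estimates beyond this sign bookkeeping are needed, and the characteristic functions $1_{\{\lambda_s>1\}}$, $1_{\{\lambda_\theta>1\}}$ are identically $1$ near the tip and play no role.

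The hard part is not the argument above but the input it borrows from \ref{Appendix_A}: passing from the integral relation Eq.~(\ref{eqn:self-similarity-condition}) to the ODE and, crucially, showing that the apparent singularity of $f$ at $S=0$ is removable with value $L$. Writing $\gamma=A\cdot I$ with $A=\frac{dR/dS}{R(\lambda_\theta-1)}$ and $I=\int_0^S(\lambda_s-1)\gamma(w)\,dw$, a short computation gives $f=\frac{A'}{A}+A(\lambda_s-1)$, in which two terms blow up like $1/S$ because $R(0)=0$. The main obstacle will be to expand $R$, $\lambda_s$, $\lambda_\theta$ to the right order at the pole — using the smoothness of the meridian there ($R(S)=S+O(S^3)$, hence $R'(0)=1$, $R''(0)=0$) and the isotropy of the elastic stretch at the tip forced by axisymmetry ($\lambda_s(0)=\lambda_\theta(0)$, without which $f$ would not even have a finite limit) — so that the singular parts combine into $\frac{R'}{R}\cdot\frac{\lambda_s-\lambda_\theta}{\lambda_\theta-1}$ and this cancellation produces exactly $L=\frac{\lambda_s'(0)-2\lambda_\theta'(0)}{\lambda_\theta(0)-1}$. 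Once that limit and the continuity of $f$ on $(0,\delta_0)$ are in hand — which is precisely what \ref{Appendix_A} supplies — the theorem follows immediately from the monotonicity/sign computation above.
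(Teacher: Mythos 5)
Your proposal is correct and follows essentially the same route as the paper: both reduce the claim to the initial value problem $\gamma'(S)=f(S)\gamma(S)$ from \ref{Appendix_A}, use the finite limit $\lim_{S\to 0^+}f(S)=\bigl(\lambda_s'(0)-2\lambda_\theta'(0)\bigr)/\bigl(\lambda_\theta(0)-1\bigr)$ to fix the sign of $f$ on a one-sided neighbourhood of the tip, and conclude that $\gamma(S)-\gamma(0)$ has that sign. The only cosmetic difference is that you invoke the exponential closed form $\gamma=\overline{\gamma}\exp\bigl(\int_0^S f\,dw\bigr)$, whereas the paper integrates $\gamma'=f\gamma$ directly to get $\gamma(S)-\gamma(0)=\int_0^S f\gamma\,d\omega$.
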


See proof in \ref{Appendix_A}. This shows that there is a local maximum of $\gamma$ at the tip location ${S}=0$ only when the elastic stretches satisfy the relation $\lambda_s'(0)- 2 \lambda_\theta'(0)<0$, and vice versa. This alone predicts that if one can measure the distribution of $\lambda_s$ and $\lambda_\theta$ near the tip by deflating the cell (e.g., through an osmoticum change), cells with $\lambda_s'(0) - 2 \lambda_\theta'(0) > 0$ may exhibit exocytosis peaking in an annular region around the tip rather than at the tip itself.  

\begin{remark}
Theorem 1 is derived from Eq. (\ref{eqn:self-similarity-condition}) which only concerns the geometry of the cell outline and the elastic strains. Thus it is valid independent of the distribution of the elastic properties and thickness of the wall varies along the meridian (i.e., $\mu$, $K$, and $h$). 
\end{remark}

\subsection{Model summary and the quasi-static inverse problem}
By rescaling variables as follows: $t \to \beta t$, ${R} \to {R} / L$, $r \to r / L$, ${S} \to {S} / L$, $s \to s / L$, $\mu_h \to \mu_h / (P L)$, $K_h \to K_h / (P L)$, and $E_h \to E_h / (P L)$—where $L = {R}(s^b, 0)$ represents the cell radius at the shank boundary—and using the same notation for the non-dimensionalized quantities, the model reduces to only two material parameters: $\mu_h$ and $K_h$ (or equivalently, $E_h$ and $\nu$ if the linear elasticity Eq.(\ref{linear_law})). The model describes dimensionless cell shape configurations, consisting of one turgid state, $\left(s(s_p, t), r(s_p, t)\right)$, and one unturgid state, $\left({S}(s_p, t), {R}(s_p, t)\right)$. 

Thus, tip growth can be simulated using the pattern of $\gamma$ as a model input by solving the growth Eqs.~(\ref{eq1}) and (\ref{strain-promoted}) together with the elastostatic system Eqs. ~(\ref{stretches})–(\ref{bc2}). For details on the solution procedure, see \ref{Appendix_C}. Rather than connecting cell shape to the pattern of $\gamma$ through simulations, this work primarily focuses on demonstrating how to infer $\gamma$ directly from cell morphology and elastic stretches via Eq.(\ref{eqn:self-similarity-condition}). Nevertheless, we briefly present simulation results for model and method validation purposes (Fig.~\ref{fig:25}). This quasi-static approach is more cost-efficient both theoretically and experimentally, as it requires only short-term tracking of elastic deformations instead of full-scale simulations or velocity-extension profiling using experimental data via Eq.~(\ref{dotrate}). 

\section{Results}

\subsection{Inference of $\gamma$ using cell outline data}

We first apply the inference framework to the moss {\it Physcomitrium patens} protonemata  and {\it Medicago truncatula} root hairs that grow their tips approximately self-similarly. Since only the turgid cell outlines that are available to us at the moment \cite{tension-distrubution-chelladurai,dumais-2004}, we use the cell outlines in the turgid state to approximate the unturgid state ${R}({S})$, and solve the turgid state $r(s)$ using our model. 

%For each cell type (moss caulonemata, moss chloronemata, and root hair), we define its canonical shape ${R}({S})$ by fitting multiple cell outline data using a ``hyphoid" function  $z=({\pi r}/{a})\cot(\pi r)$ \cite{BartnickiGarcia1989}. We find that this hyphoid function fits the tip cell data rather well, with less fitting error than the ellipse function continued by a flat region \cite{Fayant2010}. 

\subsubsection{Fitting the data into a canonical shape}
The cell shape data were provided to us as a set of outlines, each given by a sequence of $X$ and $Y$ coordinates of points on the cell outline. We first truncated each sequence of $X$ and $Y$ coordinates to remove any region beyond the point where the shank width stops increasing. 

%Then we interpolated $X$ and $Y$ with cubic splines and finally placed marker points on the interpolated outline uniformly with respect to arclength, making an arbitrarily fine discretization. The resolution of the discretization has little effect on the inferred secretion.

%Our initial attempt at characterizing the inferred secretion under a wide variety of tip geometries used a one-parameter family of shapes comprising a cylindrical pipe with ellipsoidal endcap \cite{Fayant2010}. 

Our initial attempt to characterize the outline coordinates is through a one-parameter family of shapes comprising a cylindrical pipe with ellipsoidal endcap \cite{Fayant2010}. However, this family of shapes does not fit the available data (moss {\it Physcomitrium patens} protonemata and {\it Medicago truncatula} root hairs) well in the least square sense. See Fig. \ref{fig:dumais-point-cloud-fitting} for example. 

\begin{figure}[ht]
   % \centering
   % \begin{subfigure}{0.55\linewidth}
        \includegraphics[height=7cm]{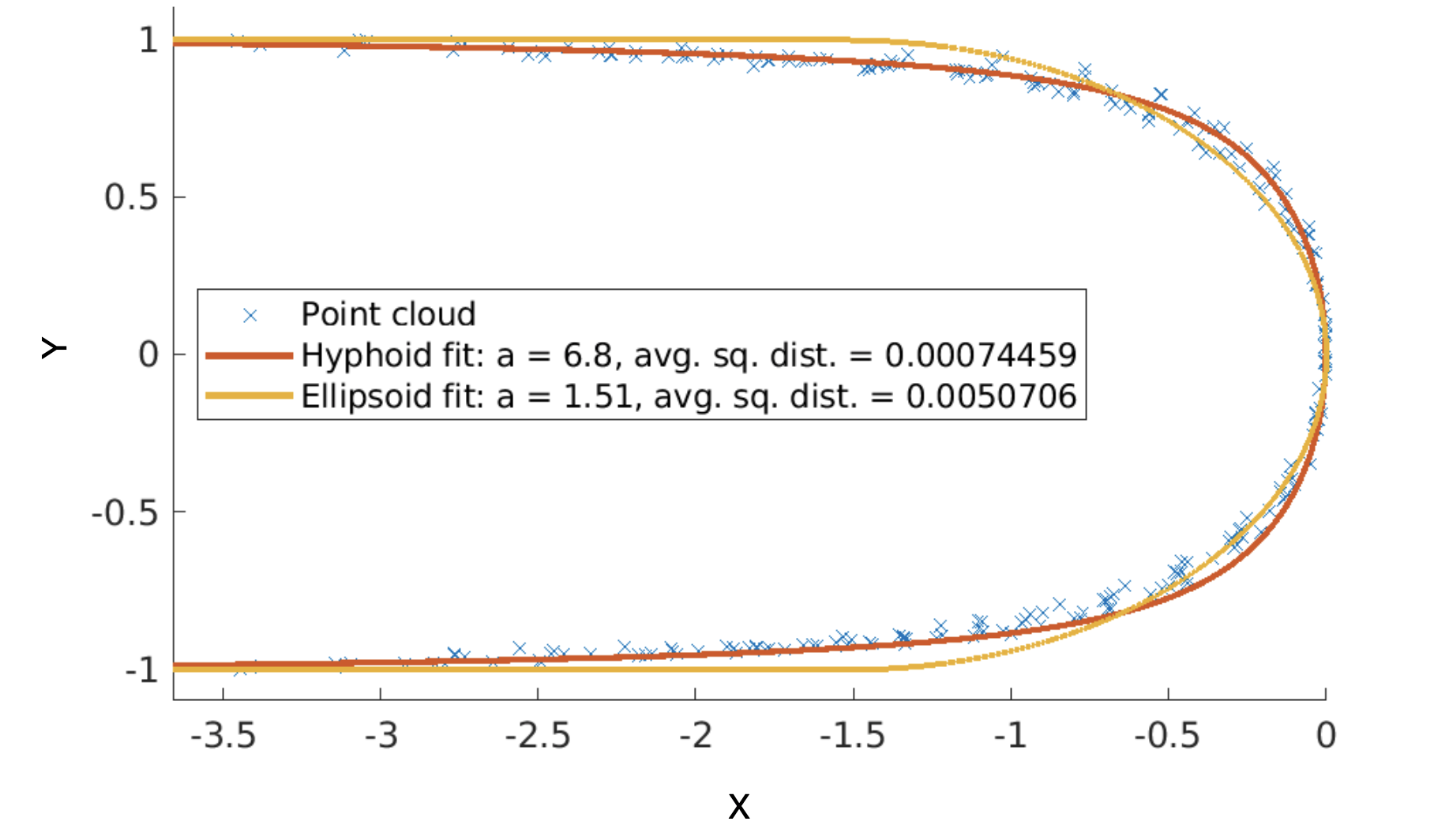}
  %      \caption{}
 %   \end{subfigure}
 %   \begin{subfigure}{0.35\linewidth}
 %       \includegraphics[height=5cm]{images/yeast_hyphoid_ellpisoid_fit.png}
  %      \caption{}
  %  \end{subfigure}
    \caption{Fitting the point cloud of root hair shapes with hyphoid and ellipsoid shapes. The fitting is done by minimizing the sum of squared distances between the points and the fit curve.}
    \label{fig:dumais-point-cloud-fitting}
\end{figure}
Then, we adopt a one-parameter family of cell outline shapes given by so-called hyphoid shapes \cite{BartnickiGarcia1989}. These shapes are defined by the equation
\begin{eqnarray} \label{eqn:hyphoid-shape}
    z = \frac{\pi r}{a} \cot(\pi r).
\end{eqnarray}
Here $a$ is a shape parameter controlling the shape of the tip; larger values of $a$ correspond to a flatter tip, and in the limit as $a \to \infty$, the hyphoid shape approaches a rectangle.

%The hyphoid function fits the data accurately in a least-squares sense with a horizontal translation $ z = \frac{\pi r}{a} \cot(\pi r)-\frac{1}{a}$ (such that $z(r)=0$). See Fig. \ref{fig:dumais-point-cloud-fitting} for example. 

%We observed that fitted moss protonema and root hair outlines qualitatively recover the curvature distributions of moss protonemata in \cite{tension-distrubution-chelladurai} and root hairs in \cite{dumais-2004}. 

%To set up the unturgid shape for our simulations, we truncate the cell outline at $z = -5$ from Eq.(\ref{eqn:hyphoid-shape}) to ensure that we capture enough of the shape to reliably observe the tip behavior. 

We begin by using material parameters $K_h = \mu_h = 24$ such that the turgid and unturgid states are not different drastically (the choice of $K_h$ and $\mu_h$ will be discussed at the end of this subsection). See Fig.~\ref{fig:2} for the cell shapes from different cell types. Then we infer their exocytosis distribution $\gamma$ using Eq.(\ref{eqn:self-similarity-condition}). For moss protonema canonical shapes, we find that exocytosis is highest at the tip (e.g., see Fig.~\ref{fig:2}a for chloronema). 
\begin{figure}[!b]
   % \centering
            \includegraphics[width=0.9 \linewidth]{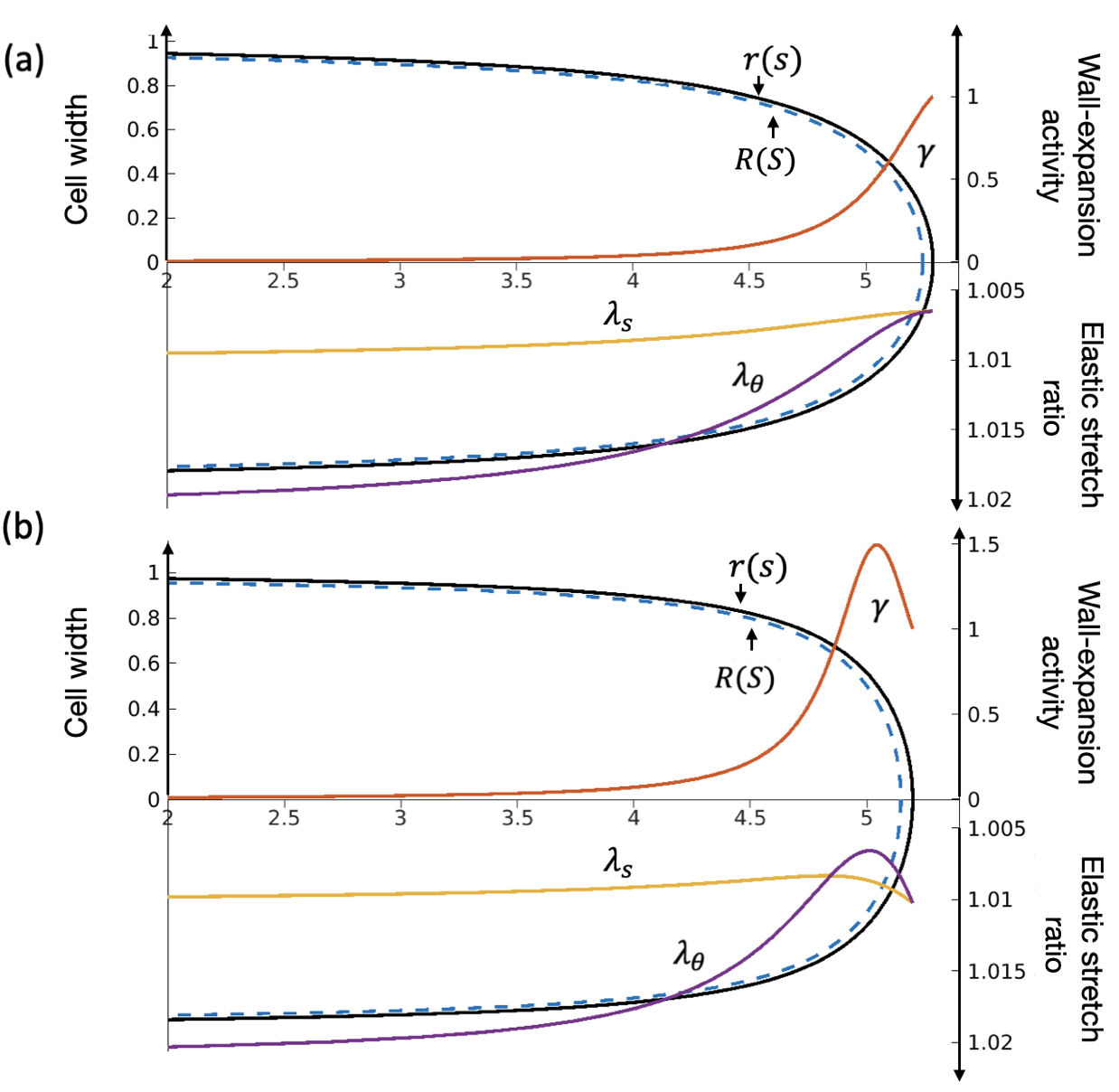}
    \caption{Cell outlines for the moss chloronema and root hair, with distributions of elastic stretch ratios and the exocytosis profile. Dashed outlines are for the unturgid states, and solid outlines are for the turgid states. (a) For the canonical shape of moss chloronema  ($a=3.4$), the exocytosis profile $\gamma$ is monotonically decreasing from the tip.  (b) For the canonical shape of {\it Medicago truncatula } root hair ($a=6.8$), $\gamma$ is non-monotone and has its global maximum away from the tip. We note they have distinct elastic-stretch distributions: $\lambda_s$ and $\lambda_\theta$. The elastic stretch ratios are solved with $K_h = \mu_h = 24$. See Fig.\ref{fig:more}  for the robustness of the results in changing wall material properties.}
        \label{fig:2}
\end{figure}

Even without numerically computing $\gamma$ from Eq.~(\ref{eqn:self-similarity-condition}), this result can be quickly deduced from Theorem 1. Specifically, since $\lambda_\theta'(0) > \lambda_s'(0) > 0$ (Fig.~2a), it follows that $\lambda_s'(0) - 2\lambda_\theta'(0) < 0$, indicating that $\gamma(0)$ has a local maximum at the tip.

In contrast, the canonical root hair shape exhibits exocytosis peaking annularly around the tip (Fig.~\ref{fig:2}b). In this case, the elastic stretch ratios satisfy $\lambda_\theta'(0) < \lambda_s'(0) < 0$, leading to $\lambda_s'(0) - 2\lambda_\theta'(0) > 0$. According to Theorem 1, this implies that $\gamma(0)$ is a local minimum. Consequently, if a global maximum exists, it must occur elsewhere. Numerical computation of the $\gamma$ profile confirms this, showing that $\gamma$ peaks at some ${S}_c > 0$ (Fig.~\ref{fig:2}b).

One may question whether the results are sensitive to changes in the elastic moduli \(K_h\) and \(\mu_h\). We find that increasing these moduli reduces the magnitude of elastic stretch ratios  \(\lambda_s\) and \(\lambda_\theta\), while decreasing them leads to larger elastic stretch ratios, without significantly altering their spatial distributions. Figure~\ref{fig:more}a illustrates this trend for \(K_h = \mu_h = 6\), where the maximal elastic stretch ratio  is approximately \(1.08\) (corresponding to an elastic strain of \(0.08\)), compared to the case of \(K_h = \mu_h = 24\), where the maximal stretch ratio  is about \(1.02\) (elastic strain of \(0.02\)). This corresponds to a four-fold reduction in elastic strain as the elastic moduli increase by a factor of four, consistent with expectations from linear elasticity.

For small strains, the nonlinear elasticity equations (\ref{rel1}), (\ref{rel2}) become equivalent to the linear elasticity equation (\ref{linear_law}), where the elastic moduli relate to Young’s modulus \(E_h\) and Poisson’s ratio \(\nu\) as $ 
\mu_h = \frac{E_h}{2(1+\nu)}, K_h = \frac{E_h}{2(1-\nu)}$. When \(K_h = \mu_h\), this formulation implies \(\nu = 0\), meaning there is no lateral contraction upon stretching. We also examine cases with \(\nu > 0\) and find that the elastic-stretch distribution remains qualitatively unchanged (Fig.~\ref{fig:more}b).

Since the inference of \(\gamma\) depends on cell shape and elastic strains, and the latter maintains its spatial distribution across different elastic moduli, the inferred values of \(\gamma\) remain qualitatively the same. In summary, we have demonstrated the robustness of the inference against variations in the wall material properties.

\subsection{Inference validation and connection to the fluid model}
To validate that the inferred $\gamma$ indeed contributes to the self-similar growth of the cell with the corresponding shape, we input the inferred $\gamma$ from Fig.~\ref{fig:2}b into the full forward model, Eqs.~(\ref{eq1})–(\ref{bc2}), and demonstrate that the elongation remains self-similar (Fig.~\ref{fig:25}a), preserving the cell shape shown in Fig.~\ref{fig:2}b.
\begin{figure}[!b]
   % \centering
            \includegraphics[width=0.9 \linewidth]{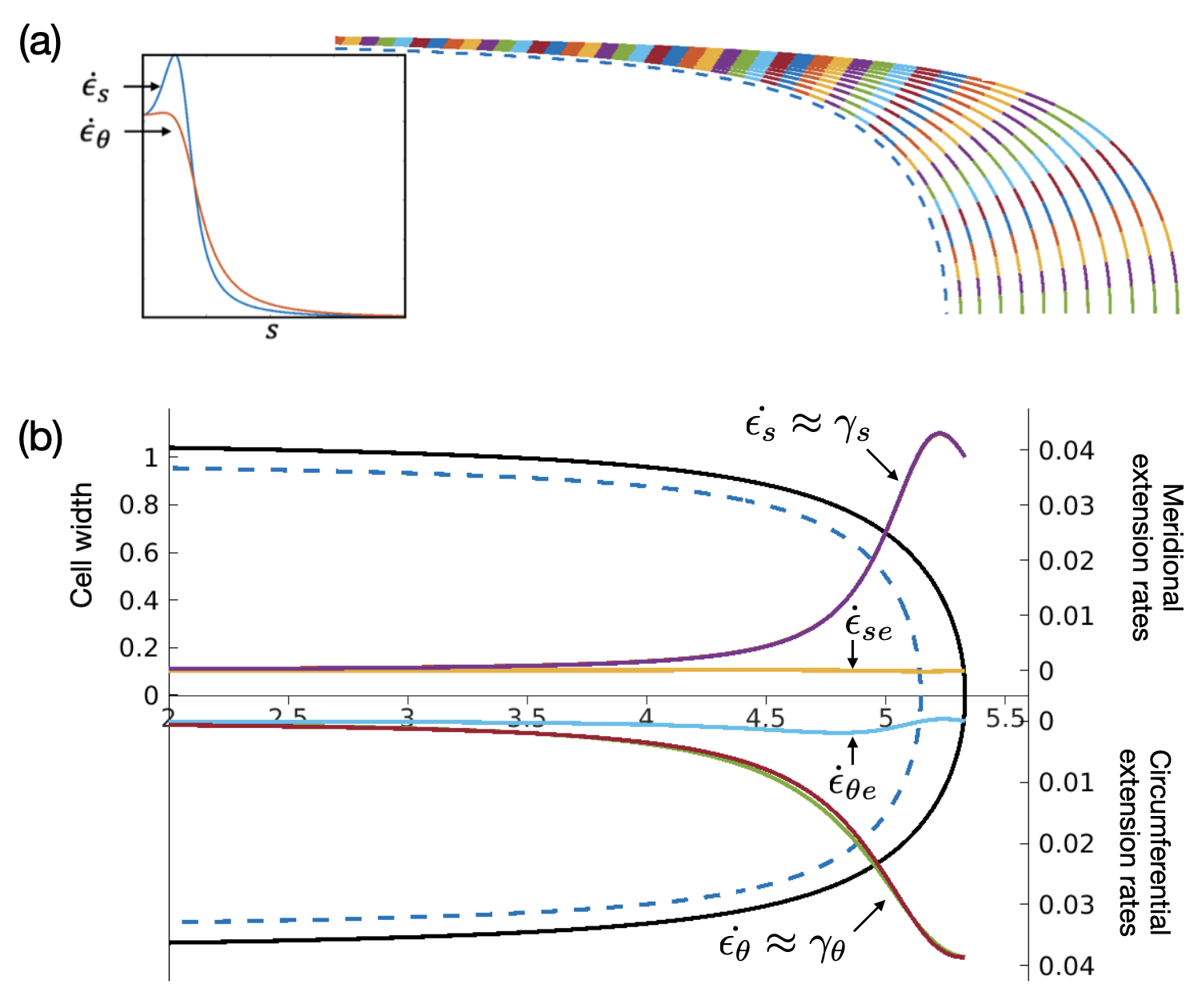}
    \caption{(a) Simulation of self-similar growth of the root-hair elongation by inputting the inferred $\gamma$ from Fig.\ref{fig:2}b into the forward model ($K_h = \mu_h = 24$). The predicted observable strain rates $\dot\epsilon_s$ and $\dot\epsilon_\theta$ of {\it Medicago truncatula } root hair (inset) are computed from the simulation and they are qualitatively consistent with the measurements in \cite{dumais-2004}. (b) Cell outlines for the root hair with $K_h = \mu_h = 6$, with distributions of  strain rates. Decompositions of $\dot{\epsilon}_s = \dot{\epsilon}_{s e} + \gamma_s$ (red, yellow, purple) and $\dot{\epsilon}_\theta = \dot{\epsilon}_{\theta e} + \gamma_\theta$ (green, cyan, maroon). See text for more details.}
        \label{fig:25}
\end{figure}

Then, we compare the model-predicted cell wall strain rates, $\dot{\epsilon}_s$ and $\dot{\epsilon}_\theta$, with experimental measurements from Fig.~4c–f in \cite{dumais-2004}. Their computation uses Eq.(\ref{dotrate}).  
The results exhibit qualitative agreement, capturing the non-monotonic behavior of both $\dot{\epsilon}_s$ and $\dot{\epsilon}_\theta$ (Fig.~\ref{fig:25}a, inset).

In the self-similar growth regimes, one can derive Eqs.(\ref{decomposition_s}) and (\ref{decomposition_r}) which decompose  $\dot{\epsilon}_s$ and $\dot{\epsilon}_\theta$ additively to the part due to irreversible extension $\gamma_s$ and $\gamma_\theta$, and that due to elastic deformation ${\dot{\epsilon}}_{se}$ and ${\dot{\epsilon}}_{\theta e}$.  $\dot{\epsilon}_{se}$ and $\dot{\epsilon}_{\theta e}$ come from the moving material traveling along the cell wall outline that is under different elastic stretch and can be removed when the turgor pressure is removed. 

Interestingly, we find the contribution of $\dot{\epsilon}_{se}$ and $\dot{\epsilon}_{\theta e}$ is negligible for a wide range of material parameters: $6\leq K_h=\mu_h\leq 24$. See Fig.~\ref{fig:25}b for $K_h=\mu_h=6$. Notice even when the cell wall is soft such that the turgid cell outline is significantly expanded than the unturgid one, the strain rates due to elastic deformation ${\dot{\epsilon}}_{se}$ and ${\dot{\epsilon}}_{\theta e}$ are still close to $0$.  This observation gives the leading-order approximation $\dot{\epsilon}_s \approx \gamma_s = \beta \gamma (\lambda_s-1)$ and $\dot{\epsilon}_\theta \approx \gamma_\theta=\beta \gamma (\lambda_\theta-1)$. When the elastic strains $(\lambda_s-1)$ and $(\lambda_\theta-1)$ are small, the nonlinear elasticity model Eqs.(\ref{rel1}) and (\ref{rel2}) converges to the linear elasticity Eq.(\ref{linear_law}). This together gives $\dot{\epsilon}_s \approx \gamma_s = \frac{\beta \gamma}{E_h} ({\sigma_s - \nu \sigma_\theta})$ and  $\dot{\epsilon}_\theta \approx \gamma_\theta=\frac{\beta \gamma}{E_h}(\sigma_\theta - \nu \sigma_s)$, which is the viscous-fluid model in \cite{rojas-pollen-tube,ohairwe2024fitness}. This means the growth-elasticity model is equivalent to the viscous-fluid model at the leading order.

\subsection{Connecting cell morphology with $\gamma$}
Further, we apply our inference framework to theoretically link the exocytosis profile with cell shapes across a range of morphologies. We focus on a family of cell shapes that can be fitted by the hyphoid function
$z = \left(\frac{\pi r}{a}\right) \cot(\pi r)$,   
which describes structures such as fungal hyphae, root hair tips, and moss protonemata. By varying the parameter $a$, the cell shapes transition from tapered to flat apical domains (Fig.~\ref{fig:S3}a). We quantify this shape transition using the parameter $R_{\kappa}$, defined as the ratio of the tip radius of curvature to the turgid cell radius, $r(s^b)$, at the shank boundary. We investigate this relationship over the range $R_{\kappa} \in [0.55, 1.2]$.

As the apical domain becomes flatter, the exocytosis profile $\gamma$ transitions from a monotonically decreasing function of the meridional distance to a non-monotonic profile (Fig.~\ref{fig:S3}d). The strain rates, $\dot{\epsilon}_s$ and $\dot{\epsilon}_\theta$, exhibit similar behavior to the exocytosis profile (Fig.~\ref{fig:S3}e,f). In contrast, the elastic strains show a similar transition from monotonic to non-monotonic behavior but are spatially anticorrelated with the exocytosis and extension profiles (Fig.~\ref{fig:S3}b,c). We note that previous theoretical results have connected the geometry of the cell with wall extensibility or secretion profile numerically. In particular, Fig. 8 in \cite{dumais-2006} and Fig. 3 in \cite{fission-yeast} have shown that the spatial profile of the meridional curvature reflects the distribution of wall extensibility and the secretion profile, respectively. Here we include how the shape transition shifts the distribution of the curvatures in Fig.~\ref{fig:S3}g,h. Note that as cells become flatter at the tip, the meridional curvature distribution also shift in a way reflecting the shift of exocytosis profile, while circumferential curvature distribution is always monotone. 

 \begin{figure}[h!]
	%\centering
	            \includegraphics[width=1.0 \linewidth]{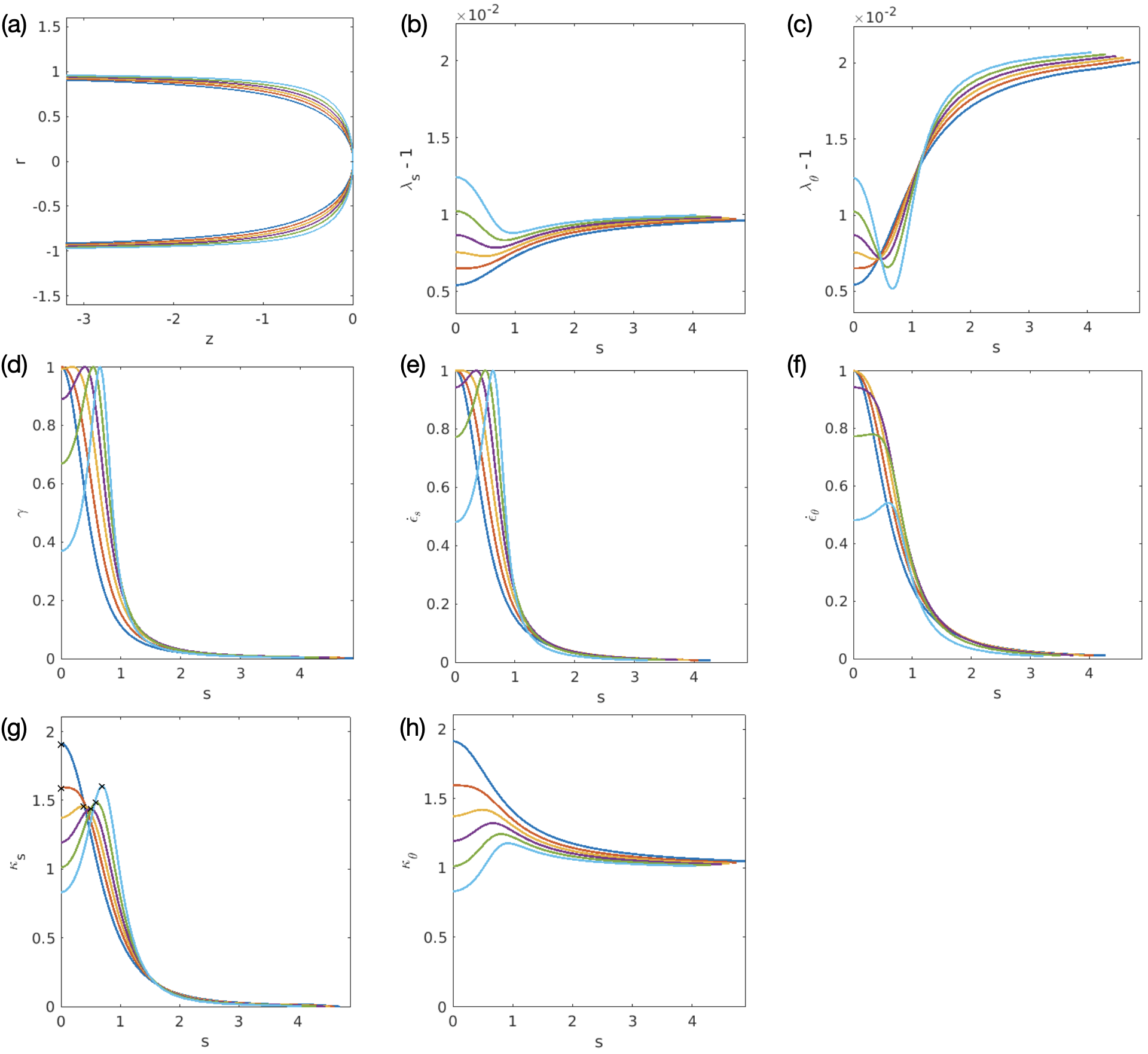}
      	\caption{(a) Sample cell outlines represented by $z=({\pi r}/{a})\cot(\pi r)$  showing a transition from tapered to flat shapes as $a$ increases, with values  $a = 3.4, 4.1, 4.8, 5.6, 6.8, 8.8$. Using the same color codes, we show the corresponding (b,c) elastic strains; (d) exocytosis profiles; (e,f) strain rates; (g,h) meridional and circumferential curvatures. }
	  \label{fig:S3}
\end{figure}

We quantify the relationship between the exocytosis profile and cell shape by measuring two key parameters as  $R_{\kappa}$ varies (see Fig.~\ref{Fig_sum}):  
1) the length of the exocytosis window, $w$, defined as the shortest arclength interval $[s_1, s_2]$ containing half of the total exocytosis window; and  
2) the meridional position, $l$, of the highest exocytosis level. 

To reduce uncertainty due to cell wall material properties, we conducted this study using different parameter values in both the nonlinear elasticity model and the linear elasticity model (see legend in Fig.~\ref{Fig_sum}b). All parameter sets yielded consistent results. For relatively tapered cells ($R_{\kappa} \leq 0.65$), changes in $R$ are primarily driven by the narrowing or widening of the exocytosis profile concentrated at the tip. This is reflected in the increase of $w$ while $l$ remains zero within this range (Fig.~\ref{Fig_sum}a). This regime also includes cell shapes characteristic of moss chloronema and caulonema.  

Beyond the critical value of $R_{\kappa} \sim 0.65$, the exocytosis peak must shift away from the tip to sustain a flatter apical domain. This transition is evident from the increase in $l$ for $R_{\kappa} \geq 0.65$ (Fig.~\ref{Fig_sum}b). In this range, the canonical shape of {\it Medicago truncatula} root hairs (at $R_{\kappa} \approx 1$) corresponds to the widest secretion window ($w \approx 0.5$, Fig.~\ref{Fig_sum}a). As $R_{\kappa}$ increases further, $w$ decreases, suggesting that maintaining an extremely flat tip shape requires exocytosis to be confined to a narrow annular region—an arrangement that requires much more precise patterning of vesicle delivery to the membrane compared to wider annular zones or simpler tip-focused exocytosis.

% 192 words
\begin{figure}[h!]
    \centering
    \includegraphics[width=0.8\linewidth]{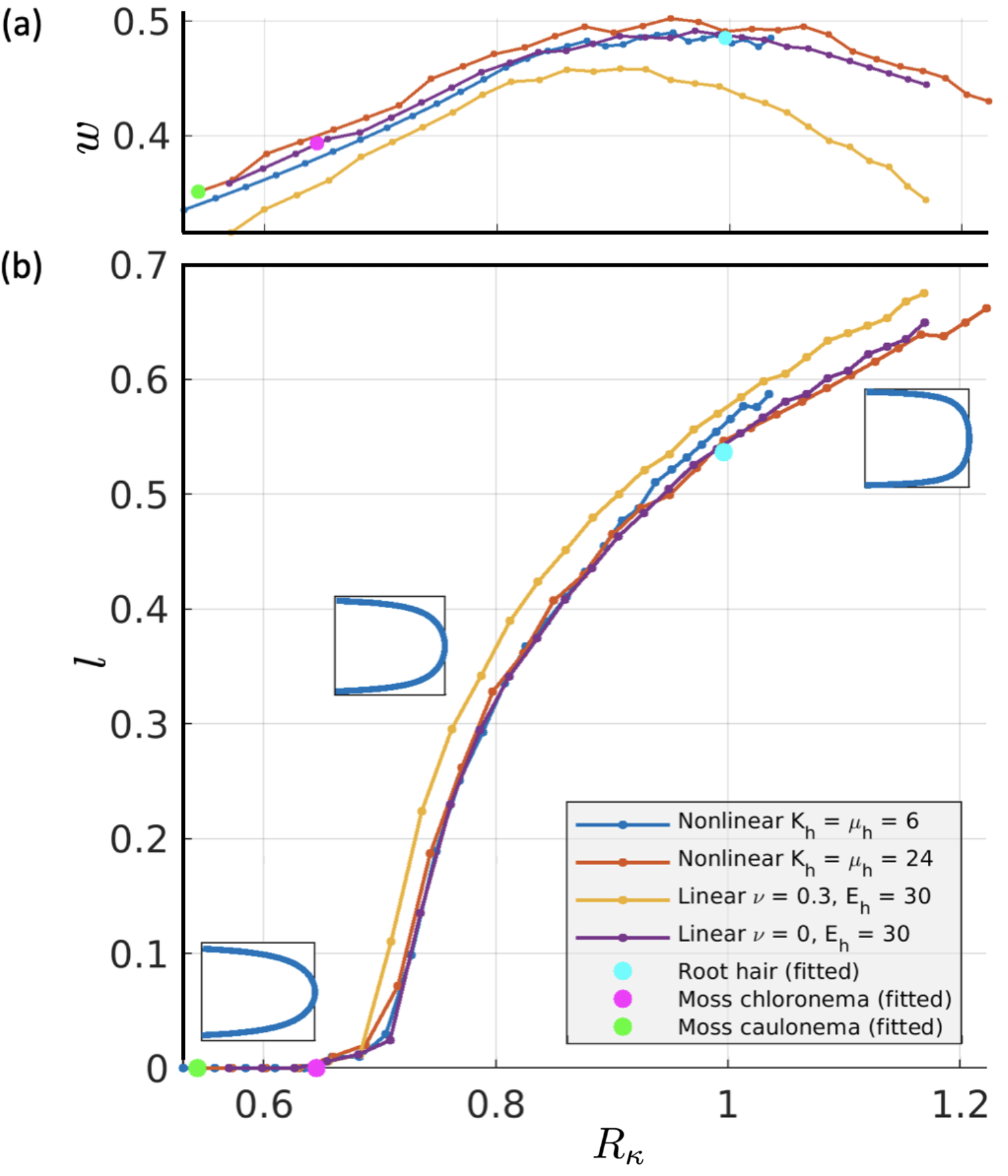}
    % 164 words
   % \begin{subfigure}{\linewidth}
    %    \includegraphics[width=\linewidth]{images/tip_shape_variety_sk_sp.png}
    %    \caption{Location of the maximal secretion rate, [ADD new notation, mark the range of a]}
   %         \label{fig:geometry-and-secretion-peak}
   % \end{subfigure} \\
    % 160 words
  %  \begin{subfigure}{\linewidth}
   %     \includegraphics[width=\linewidth]{images/tip_shape_variety_sk_W.png}
   %     \caption{Length of secretion window}
   %         \label{fig:geometry-and-secretion-window}
  %  \end{subfigure}

    \caption{The width $w$ of the exocytosis profile (a) and the meridional position $l$ of secretion maximum (b) vs. $R_{\kappa}$, ratio of the radius of the tip curvature to the cell radius. The dependence of $w$ and $l$ on $R_{\kappa}$ is insensitive to the change of material parameters and the change of constitutive laws (see the legend for the correspondence of each curve). On each curve, data points are from inferences using unturgid cell shapes with values of $a$ ranging from 3.4 to 9, in increments of 0.2.
}
  \label{Fig_sum}\end{figure}

\section{Discussion}
\subsection{In comparison with previous models}
\label{comparision_sec}
In this paper, we introduced a dual-configuration model inspired by recent experiments on deflating fungal cells \cite{davi-2019,chevalier2023cell}, which enable the measurement of cell wall mechanics involving both turgid and unturgid configurations. This model separates the wall extension process into two components: an experimentally trackable elastic deformation and the irreversible extension of the underlying unturgid configuration.

Although this modeling approach appears more complex than previous models, it shares strong connections with them. First, regarding turgor-induced wall tension mechanics, our vector equation, Eq.~(\ref{eqn:local-force-balance}), which explicitly relies on the local cell radius and meridional tangent directions, is equivalent to the classical curvature-based equations $\kappa_s \sigma_s + \kappa_\theta \sigma_\theta = P$ and $\kappa_\theta \sigma_s = P/2$.

Second, the dual-configuration model captures wall fluidity, a key feature in previous models. We categorize earlier models into three groups:  
1) Viscous (or viscoplastic) fluid models \cite{rojas-pollen-tube,deJong2020,Camps2009,dumais-2006,Camps2012,DrakeVavylonis,ohairwe2024fitness},  
2) Dual-flow model \cite{fission-yeast}, and  
3) Discrete elastic models \cite{goriely-tabor-jtb,self-similar-tip-growth-goriely-tabor,Fayant2010}.  

We have shown that our dual-configuration model is more general but reduces to the dual-flow model (2) in the self-similar growth regime. Additionally, in this regime, it is equivalent—at leading order—to fluid models (1), which rely on the relations $\dot{\epsilon}_s \propto \gamma (\sigma_s - \nu \sigma_\theta)$ and $\dot{\epsilon}_\theta \propto \gamma (\sigma_\theta - \nu \sigma_s)$ (e.g., \cite{rojas-pollen-tube,ohairwe2024fitness}). Comparisons with discrete elastic models (3) are more challenging since these models describe wall extension as a sequence of discrete events alternating between cell inflation due to turgor pressure and a reset of the inflated configuration to a new, unextended state.

\subsection{On the connection between exocytosis and cell shape}
Previous fluid models have explored the connection between chemical profiles driving wall extension and cell morphologies in the self-similar growth regime, often relying on forward simulations where cell shape is predicted from a given exocytosis profile \cite{fission-yeast,ohairwe2024fitness}. Notably, \cite{ohairwe2024fitness} identified bifurcations where the same exocytosis profile yields multiple cell shapes.

In contrast, our inference framework works in reverse, showing that cell wall geometry and mechanics uniquely determine the exocytosis profile, as the problem reduces to an initial value problem of a first-order ordinary differential equation. Thus, while one exocytosis profile may correspond to multiple morphologies, two distinct profiles cannot produce the same morphology given that the elastic strains of the cell wall are uniquely determined by the specified elastic moduli and wall thickness.

Our framework can be applied directly to cell wall configuration data without observing wall flow, as the deformation relationship between turgid and unturgid configurations is sufficient. Ideally, this relationship can be experimentally obtained by tracking material points between configurations. In the absence of such data, we use the cell outline data in the turgid state, and generate deformation fields by solving the elastostatic system Eqs.~(\ref{stretches})–(\ref{bc2}), and apply the self-similarity condition Eq.~(\ref{eqn:self-similarity-condition}).

From tapered to flattened tip configurations, we identify two strategies for achieving specific cell shapes: adjusting the polarity of the exocytosis profile and relocating the exocytosis ``hotspot." For tapered shapes resembling moss protonemata ($R_{\kappa} < 0.65$), changes in the secretion window width can modify the tip shape. In contrast, for flatter-tip shapes ($R_{\kappa} > 0.65$), the secretion hotspot must shift away from the tip to maintain the desired morphology.

%Interestingly, the canonical shape of {\it Medicago truncatula} root hairs ($R_{\kappa} \approx 1$) corresponds to the widest secretion window, suggesting this shape is easier to maintain than slightly more tapered or flatter hyphoid-like shapes. Since increasing the focus of secretion (narrowing the secretion window) requires more chemical energy, this shape may be evolutionarily favored. Achieving even flatter shapes ($R_{\kappa} > 1$) requires more focused exocytosis (smaller $w$), making shape maintenance more sensitive to changes in the hotspot location (with larger $|dR_{\kappa}/dl|$). This may explain the absence of extremely flat filaments across walled cell species.
Interestingly, the canonical shape of \textit{Medicago truncatula} root hairs (\(R_{\kappa} \approx 1\)) corresponds to the widest secretion window in our model. This suggests that such shapes may be more robustly maintained than flatter ones with \(R_{\kappa} > 1\), which require a narrower annular secretion zone (with $w$ decreasing from \(R_{\kappa} \approx 1\) in Fig.\ref{Fig_sum}a). While we do not question the viability of tapered geometries (\(R_{\kappa} < 1\))---which are widespread across fungal and plant species \cite{Camps2012}---we note that achieving \textit{flatter} shapes requires more spatially confined exocytosis. This increased localization makes shape maintenance more sensitive to small displacements of the secretion hotspot, as reflected in the larger values of \(|dR_{\kappa}/dl|\) in Fig.\ref{Fig_sum}b. Such sensitivity may help explain the relative rarity of extremely flat tip shapes across walled cell species.

Our findings also shed light on the ongoing debate regarding the spatial distribution of exocytosis in tip-growing cells, which remains unclear due to experimental limitations \cite{luo2016measuring}. While exocytosis is often assumed to concentrate at the tip, vesicle distribution measurements in pollen tubes suggest that an annular exocytosis pattern around the tip can also occur \cite{zonia2008vesicle,bove2008magnitude}.

Within the spatially uniform elastic moduli regime, we have shown our result is robust to the change of the cell wall material property, and in particular shows that the self-similar growth does not require the wall material property to have a spatial gradient as suggested in previous works \cite{goriely-tabor-jtb,self-similar-tip-growth-goriely-tabor,Fayant2010}. 

\subsection{Conclusion}
In conclusion, we have introduced a new dual-configuration model of tip growth, establishing theoretical connections with previous fluid models. In the self-similar growth regime, we provided a proof of concept for a novel approach to linking the exocytosis profile with cell morphology by tracking elastic deformations between turgid and unturgid configurations. Using this framework, we have connected secretion profiles with cell morphology in moss protonemata, {\it Medicago truncatula} root hairs, and other theoretical shapes within the family of hypha-like forms.

\section*{Acknowledgment}
We thank Dr. Jacques Dumais for providing the root hair cell outline data from \cite{dumais-2004}. M.W. and K.S. acknowledge partial funding from the National Science Foundation- Division of Mathematical Sciences (NSF-DMS) grant DMS-2012330 and 2144372 (CAREER). L.V. acknowledges partial funding from the Division of Molecular and Cellular Biosciences (NSF-MCB) grant MCB-1253444 and 2154573. M.W. gratefully acknowledges the generous hosting and support provided by the Center for Computational Biology at the Flatiron Institute during her sabbatical.

\appendix

\section{The flow and strain rates from the dual-configuration model}
\label{Appendix_B}
Equations~\ref{dotrate}, \ref{eq1}, and \ref{stretches} define the dual-configuration framework and imply a connection between the meridional wall-surface flow velocities in the turgid and unturgid geometries, denoted \( v_s \) and \( V_S \), respectively. These velocities can be related by differentiating the arc-length mappings \( s(s_p, t) \) and \( S(s_p, t) \) with respect to time.

Specifically, we note that \( s(s_p, t) \) can be expressed as a function of \( S \) through a composite mapping: 
\[
s(s_p, t) = \tilde{s}(S(s_p, t), t),
\]
where \( \tilde{s}(S, t) \) denotes the turgid arc-length expressed in terms of the unturgid coordinate \( S \). Applying the chain rule yields:
\[
\frac{\partial s}{\partial t}\bigg|_{s_p} = \left.\frac{\partial \tilde{s}}{\partial t}\right|_{S} + \frac{\partial \tilde{s}}{\partial S} \cdot \frac{\partial S}{\partial t}\bigg|_{s_p}.
\]
By a slight abuse of notation—writing \( \tilde{s} \) as \( s \) —we obtain:
\begin{eqnarray}
v_s = \frac{\partial s}{\partial t}\bigg|_{s_p} = \left.\frac{\partial s}{\partial t}\right|_{S} + \lambda_s V_S,
\label{flow}
\end{eqnarray}
where \( \lambda_s = \partial s / \partial S \) is the meridional stretch ratio. This is Equation~\ref{flow}, which reflects how the velocity in the deformed (turgid) configuration is composed of both the time evolution at fixed reference position and the underlying material flow.

In case of self-similar growth, where the cell shape reaches a steady state,  we have $\frac{\dee s}{\dee t}|_{{S}} = 0$, which gives $v_s = \frac{\dee s}{\dee {S}} V_{S} = \lambda_s V_{S}$ from \cite{fission-yeast}. Given its formulation based on flow descriptions, we denote the theory from \cite{fission-yeast} as dual-flow model. 

Restricted to the self-similar growth regime, \cite{fission-yeast} also derived the additive decomposition among different strain rates. We include it here as it is also a property of the two-configuration model in the self-similar growth regime. By taking $\frac{\dee }{\dee s}$ of the equation $v_s = \lambda_s V_{S}$ and using  $\lambda_s = \frac{\partial s}{\partial S}$, we obtain \[\frac{\dee v_s}{\dee s} = \frac{\dee \lambda_s}{\dee s} \frac{v_s}{\lambda_s} + \frac{\dee V_{S}}{\dee {S}}.\]

Recognizing $\frac{\dee v_s}{\dee s} = \dot{\epsilon}_s$ and $\frac{\dee V_{S}}{\dee {S}} = \gamma_s$  from Eqs. \ref{dotrate} and \ref{eq1}, we have thus obtained the additive decomposition of the meridional strain rate into two parts:
\begin{eqnarray} 
\label{decomposition_s}
    \dot{\epsilon}_s = \dot{\epsilon}_{se} + \gamma_s, 
\end{eqnarray} 
where $\dot{\epsilon}_{se} = \frac{\dee \lambda_s}{\dee s} \frac{v_s}{\lambda_s}$ is the rate of extension due to elastic deformation and $\gamma_s$ is meridional  growth strain rate in the unturgid configuration. In the circumferential direction, a similar argument yields the analogous decomposition of the circumferential strain rate as
\begin{eqnarray} 
\label{decomposition_r}
    \dot{\epsilon}_\theta = \dot{\epsilon}_{\theta e} + \gamma_\theta,
\end{eqnarray} 
where $\dot{\epsilon}_{\theta e} = \frac{\dee \lambda_\theta}{\dee s} \frac{v_s}{\lambda_\theta}$. We note that Eqs.~\ref{decomposition_s} and \ref{decomposition_r} are not valid in general, especially outside the self-similar growth regime.  

\section{Derivation of Equation \ref{eqn:local-force-balance}}
\label{Appendix_force}
We derive the force balance used in Eq.(\ref{eqn:local-force-balance}) by applying the surface divergence to the in-plane stress tensor \begin{eqnarray} 
\boldsymbol{\sigma} = \sigma_s \hat{t} \otimes \hat{t} + \sigma_\theta \hat{\theta} \otimes \hat{\theta},
\end{eqnarray}  on an axisymmetric shell to balance an internal pressure: $\nabla_s\cdot\boldsymbol\sigma +P\hat{n}=0$, 
where \( \sigma_s \) and \( \sigma_\theta \) are the meridional and circumferential tensions, respectively. The surface divergence operator in axisymmetric coordinates is
 \begin{eqnarray} 
\nabla_s = \hat{t} \frac{\partial}{\partial s} + \frac{\hat{\theta}}{r} \frac{\partial}{\partial \theta},
\end{eqnarray} 
where \( s \) is arclength along the meridian and \( r \) is the radial coordinate.

The divergence of the stress tensor then becomes
\begin{eqnarray} 
\nabla_s \cdot \boldsymbol{\sigma} 
= \nabla_s \cdot (\sigma_s \hat{t} \otimes \hat{t}) + \nabla_s \cdot (\sigma_\theta \hat{\theta} \otimes \hat{\theta}).
\end{eqnarray} 
The first term yields
\begin{eqnarray} 
\nabla_s \cdot (\sigma_s \hat{t} \otimes \hat{t}) 
= \frac{\partial (\sigma_s \hat{t})}{\partial s} + \frac{\sigma_s}{r} (\hat{r} \cdot \hat{t}) \hat{t}.
\end{eqnarray} 

The second term gives
\begin{eqnarray} 
\nabla_s \cdot (\sigma_\theta \hat{\theta} \otimes \hat{\theta}) 
= -\frac{\sigma_\theta}{r} \hat{r}.
\end{eqnarray} 

To obtain Eq.(\ref{eqn:local-force-balance}), we use the identity
\[
\frac{\sigma_s}{r} (\hat{r} \cdot \hat{t}) \hat{t}
= \frac{\sigma_s}{r} \hat{r}
- \frac{\sigma_s}{r} \hat{r}
+ \frac{\sigma_s}{r} (\hat{r} \cdot \hat{t}) \hat{t},
\]
and observe that the last two terms combine as
\[
- \frac{\sigma_s}{r} \hat{r}
+ \frac{\sigma_s}{r} (\hat{r} \cdot \hat{t}) \hat{t}
= - \frac{\sigma_s}{r} (\hat{r} \cdot \hat{n}) \hat{n}
= - \frac{\sigma_s \sin\alpha}{r} \hat{n},
\]
where \( \alpha \) is the angle from \( \hat{t} \) to \( \hat{r} \).

Thus, the total force balance becomes
\[
\vec{F} := \nabla_s \cdot \boldsymbol{\sigma} + P \hat{n} 
= \frac{\partial (\sigma_s \hat{t})}{\partial s} 
+ \frac{\sigma_s - \sigma_\theta}{r} \hat{r} 
- \frac{\sigma_s \sin\alpha}{r} \hat{n} 
+ P \hat{n} = 0,
\]
which is Eq.(\ref{eqn:local-force-balance}) in the main text.

\section{Solving the secretion profile $\gamma$ as an initial value problem}
\label{Appendix_A}
Let $U_s({S}) = \lambda_s({S}) - 1$, $U_\theta({S}) = \lambda_\theta({S}) - 1$, and $\tilde{U}({S}) = U_s({S}) / U_\theta({S})$. From the self-similarity condition Eq.(\ref{eqn:self-similarity-condition}) in the main text, and by assuming $U_s({S})>0$ and $U_\theta({S})>0$ we have
\begin{eqnarray}
\label{sec}
    \gamma({S}) = \frac{d{R} / d{S}}{U_\theta({S}) {R}({S})} \int_0^{{S}} U_s(w) \gamma(w)\ dw.
\end{eqnarray}
Differentiating through, we obtain
\begin{align}
\nonumber
    \gamma'({S}) &= \frac{d{R} / d{S}}{U_\theta({S}) {R}({S})} U_s({S}) \gamma({S}) + {\bigg (} \frac{U_\theta({S}) {R}({S}) d^2 {R} / d{S}^2 }{(U_\theta({S}) {R}({S}))^2}  \\
    &\qquad - \frac{d{R} / d{S} \left( \lambda_\theta'({S}) {R}({S}) + U_\theta({S}) d{R} / d{S} \right)}{(U_\theta({S}) {R}({S}))^2} {\bigg )} \int_0^{{S}} U_s(w) \gamma(w)\ dw \nonumber \\
    &= \tilde{U}({S}) \frac{d{R}/d{S}}{{R}({S})} \gamma({S}) + {\bigg (} \frac{d^2 {R} / d{S}^2}{U_\theta({S}) {R}({S})} \nonumber \\
    &\qquad - \frac{(d{R} / d{S}) (\lambda_\theta'({S}) {R}({S}) + U_\theta({S}) d{R} / d{S})}{(U_\theta({S}) {R}({S}))^2} {\bigg )} \frac{\gamma({S}) U_\theta({S}) {R}({S})}{d{R} / d{S}} \nonumber \\
    &= \left( (\tilde{U}({S}) - 1) \frac{d{R} / d{S}}{{R}({S})} + \frac{d^2 {R} / d {S}^2}{d{R} / d{S}} - \frac{\lambda_\theta'({S})}{U_\theta({S})} \right) \gamma({S}). 
\end{align}
Thus, we define
\begin{eqnarray}
\label{ff}
    f({S}) = (\tilde{U}({S}) - 1) \frac{d{R} / d{S}}{{R}({S})} + \frac{d^2 {R} / d {S}^2}{d{R} / d{S}} - \frac{\lambda_\theta'({S})}{U_\theta({S})},
\end{eqnarray}
and convert the self-similarity condition to an initial value problem with $\gamma'({S}) = f({S}) \gamma({S})$ and the initial condition $\gamma(0) = \bar{\gamma}$. Note that $f$ depends only on $\lambda_s$, $\lambda_\theta$, $\lambda_\theta'$, ${R}$, $d{R} / d{S}$, and $d^2 {R} / d{S}^2$, which means that the secretion is constrained by only the information of the strains up to first-order derivative and the shape up to second-order derivative.

Furthermore, one may verify that 
\begin{eqnarray}
    \lim_{{S} \to 0+} f({S}) = \frac{\lambda_s'(0) - 2 \lambda_\theta'(0)}{U_\theta(0)}
\end{eqnarray}
is finite as long as $U_\theta(0)=\lambda_s({0}) - 1\neq 0$ and the tip is smooth such that $d{R}/d{S}$ is differentiable and $d{R}/d{S}|_{{S}=0}=1$.  This can be seen from the following details: 1)     $\lim_{{S} \to 0+} (\tilde{U}({S}) - 1) \frac{d{R} / d{S}}{{R}({S})}  = \lim_{{S} \to 0+} d{R} / d{S} \frac{(\tilde{U}({S}) - 1)' }{d{R} / d{S}} = \lim_{{S} \to 0+} (\tilde{U}({S}) - 1)' = \frac{(U_s-U_\theta)'U_\theta-(U_s-U_\theta)(U_\theta)'}{(U_\theta)^2} =\frac{(U_s-U_\theta)'}{U_\theta}$. 2) $\lim_{{S} \to 0+} \frac{d^2 {R} / d ({S})^2}{d{R} / d{S}} = 0$ due to the fact that for a arc-length parameterized curve $\alpha({S}) = ({Z}({S}), {R}({S}))$, we must have $\alpha''({S})\perp \alpha'({S})$. Evaluating this relation at ${S}=0$ leads to $d^2 {R} / d {S}^2|_{{S}=0}=0$. 3) Thus we have    $ \lim_{{S} \to 0+} f({S}) = \frac{(U_s-U_\theta)'}{U_\theta} +0- \frac{\lambda_\theta'({S})}{U_\theta({S})} = \frac{\lambda_s'(0) - 2 \lambda_\theta'(0)}{U_\theta(0)}$.

We define $f(0)= \lim_{{S} \to 0+} f({S})$. Then there exists a unique solution $\gamma({S})=\overline{\gamma}\exp(\int_0^{{S}}fd\omega)$ on some interval $[0,\Sigma)$ as long as $f$ is continuous on $[0,\Sigma)$, which requires:  $d{R}/d{S} \neq 0$, $R\neq0$, and $\lambda_s \neq1$ in $(0,\Sigma)$.

The above analysis gives rise to criteria for detecting whether the distribution of secretion is monotonic (Theorem \ref{thm:montonicity-condition} ) and determining if there exists a local maximum of $\gamma$ at the tip (Theorem \ref{thm:local extrema} in the main text).

\begin{theorem} \label{thm:montonicity-condition}
   Given $\lambda_s$ continuous and $\lambda_\theta$, ${R}$ and $d{R}/d{S}$ continuously differentiable, the secretion distribution $\gamma$ from Eq.(\ref{sec}) and the condition $\gamma(0)=\overline{\gamma}>0$ is strictly monotonic on the interval $[0, \Sigma]\subset D = \{0\}\cup\{{S} | {R} \neq 0,d{R}/d{S} \neq 0,\lambda_s \neq1\}$ iff $f({S})$ from Eq.(\ref{ff})
   % \begin{eqnarray}
   %     f({S})=({R})''({S}) + (\tilde{U}({S}) - 1) \frac{[({R})'({S})]^2}{{R}({S})} - \frac{\lambda_\theta'({S})}{U_\theta({S})} ({R})'({S}),
  %  \end{eqnarray}
   % or equivalently,
 %\begin{eqnarray}
 %       \frac{d\lambda_s/ds}{\lambda_s} - 2 \frac{d\lambda_\theta/ds}{\lambda_\theta} + \frac{\lambda_\theta (d^2 r / ds^2) - r (d^2 \lambda_\theta / ds^2)}{\lambda_\theta (dr / ds) - r (d\lambda_\theta / ds)} + \tilde{U} - 1 - \frac{d \lambda_\theta/ds}{U_\theta}
  %  \end{eqnarray}
   never vanishes away from the tip.
\end{theorem}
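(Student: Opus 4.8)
The plan is to reduce the monotonicity question to the sign of $f$ by using the explicit solution of the initial value problem established above. On $[0,\Sigma]\subset D$ the coefficient $f$ from Eq.~(\ref{ff}) is continuous — including at the endpoint, where $f(0):=\lim_{S\to0^+}f(S)$ is finite by the computation following Eq.~(\ref{ff}) — and the unique solution of $\gamma'(S)=f(S)\gamma(S)$, $\gamma(0)=\overline\gamma$, is $\gamma(S)=\overline\gamma\exp\!\big(\int_0^S f(\omega)\,d\omega\big)$. Hence $\gamma$ is $C^1$ and strictly positive on $[0,\Sigma]$, so at every $S$ the derivative $\gamma'(S)=f(S)\gamma(S)$ has exactly the sign of $f(S)$. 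Everything then reduces to connecting the sign pattern of $f$ on $(0,\Sigma]$ with strict monotonicity of $\gamma$.

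For the ``if'' direction I would reason as follows. Assume $f$ never vanishes on $(0,\Sigma]$. Since $f$ is continuous on $[0,\Sigma]$ and nonzero on the connected set $(0,\Sigma]$, the intermediate value theorem forces $f$ to keep one fixed sign there; say $f>0$ on $(0,\Sigma]$ (the opposite sign is handled symmetrically and yields a strictly decreasing $\gamma$). Then $\gamma'(S)=f(S)\gamma(S)>0$ for every $S\in(0,\Sigma]$, so for all $0\le a<b\le\Sigma$ we have $\gamma(b)-\gamma(a)=\int_a^b\gamma'(s)\,ds>0$; therefore $\gamma$ is strictly increasing, and in particular strictly monotonic, on $[0,\Sigma]$. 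Note the argument is insensitive to the value of $f(0)$, which is why the hypothesis only concerns $(0,\Sigma]$.

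For the converse I would argue by contraposition: if $f(S_0)=0$ for some $S_0\in(0,\Sigma]$, then $\gamma'(S_0)=0$, so $S_0$ is a critical point of $\gamma$, and I want to conclude that $\gamma$ fails to be strictly monotonic. This last implication is the step I expect to be the main obstacle, since a $C^1$ function can have an interior critical point and still be strictly monotone (e.g.\ when $\gamma'$ touches zero from one side without changing sign). To bridge it one must rule out such a non-sign-changing zero of $f$ — that is, argue from the explicit form Eq.~(\ref{ff}) and the regularity of $\lambda_s,\lambda_\theta,R$ that any zero of $f$ in $(0,\Sigma]\subset D$ is a genuine sign change — after which $\gamma'$ changes sign at $S_0$, $\gamma$ attains a strict local extremum there, and monotonicity fails. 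If that structural fact is unavailable, the clean and fully rigorous replacement is the equivalence ``$\gamma$ is strictly monotonic on $[0,\Sigma]$ iff $f$ does not change sign on $(0,\Sigma]$ and is not identically zero on any nondegenerate subinterval,'' which follows directly from $\gamma(b)-\gamma(a)=\int_a^b f(s)\gamma(s)\,ds$ together with the positivity of $\gamma$.
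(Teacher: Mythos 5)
Your ``if'' direction is exactly the paper's argument: the paper likewise invokes uniqueness to write $\gamma(S)=\overline\gamma\exp\bigl(\int_0^S f\,d\omega\bigr)$ and concludes that a nowhere-vanishing continuous $f$ must keep a single sign, so $\gamma'=f\gamma$ keeps that sign and $\gamma$ is strictly monotone. That half of your proposal is correct and complete.

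For the ``only if'' direction the paper offers only the sentence ``the forward direction is obvious using proof by contradiction,'' i.e.\ it tacitly assumes that $f(S_0)=0$ at some $S_0\in(0,\Sigma]$ destroys strict monotonicity. You are right to be suspicious of that step: for a general continuous coefficient the implication fails. If $f\ge 0$ on $[0,\Sigma]$ with a single isolated zero at $S_0$ (e.g.\ $f(S)=(S-S_0)^2$ near $S_0$), then $\gamma(b)-\gamma(a)=\int_a^b f(s)\gamma(s)\,ds>0$ for all $a<b$, since the integrand is nonnegative, continuous, and not identically zero; hence $\gamma$ is strictly increasing even though $f$ vanishes away from the tip. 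Nothing in the stated hypotheses (continuity of $\lambda_s$, continuous differentiability of $\lambda_\theta$, $R$, $dR/dS$) rules out a non-sign-changing zero of the particular $f$ in Eq.~(\ref{ff}), so the forward implication is a genuine gap --- but it is a gap in the theorem and in the paper's own one-line dismissal, not a defect introduced by you. Your proposed repair, replacing ``$f$ never vanishes'' by ``$f$ does not change sign on $(0,\Sigma]$ and is not identically zero on any nondegenerate subinterval,'' is the correct equivalence and follows immediately from $\gamma(b)-\gamma(a)=\int_a^b f(s)\gamma(s)\,ds$ together with the strict positivity of $\gamma$. In short: where the paper's proof is sound you reproduce it faithfully, and the one step you could not close is precisely the step the paper waves away; your honest flagging of it, and your corrected statement, improve on the published argument.
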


\begin{proof}
The forward direction is obvious using proof by contradiction. Now we prove the backward direction. Given the uniqueness property, the solution to $\gamma'({S}) = f({S}) \gamma({S})$ and $\gamma(0)=\overline{\gamma}$ takes the form $\gamma({S})=\overline{\gamma}\exp(\int_0^{{S}}fd\omega)$. If $f$ never vanishes, we have either $f>0$ or $f<0$. Thus, $\gamma({S})$ either monotonically increases (given  $f>0$) or monotonically decreases (given $f<0$).

%	The second statement is obtained through a careful change of variables from ${S}$ to $s$.
\end{proof}

{\bf Proof of Theorem \ref{thm:local extrema}}.
\begin{proof}
We will only prove the first statement. When $ {\lambda_s'(0) - 2 \lambda_\theta'(0)}<0
$ and $\lambda_\theta(0)>1$, by continuity $\lim_{{S}\to0+}f({S})<0$.
Also by continuity, there exists $\delta>0$ such that $f({S})<0$ for $0\leq {S}<\delta$. By integrating $\gamma'=f\gamma$ , we obtain
   \begin{eqnarray}
      \gamma({S})-\gamma(0)= \int_0^{{S}}f\gamma d\omega<0
    \end{eqnarray}
    for any $0<{S}<\delta$. So $\gamma$ has one local maximum at ${S}=0$.
\end{proof}

\section{The cubic-spline solutions}
\label{Appendix_C}
The simulation of tip growth is determined by the elastostatic mechanical system Eqs. ~(\ref{stretches})–(\ref{bc2}) and the strain-promoted growth process Eqs.~(\ref{eq1}) and (\ref{strain-promoted}). At each time step $t$, we first solve the mechanical system Eqs. ~(\ref{stretches})–(\ref{bc2}) by the cubic-spline solutions of three mappings $r(s,t)$, $z(s,t)$, and $s({S},t)$ given the cubic splines of coordinates ${R}({S},t)$ and ${Z}({S},t)$. The steps of this procedure are summarized as follows. For simplicity of the presentation, we drop $t$.
\begin{enumerate}
    \item Parametrize the unturgid outline by $({Z}({S}), {R}({S}))$ and discretize this outline by $N$ marker points $(Z_i, R_i)$.  Interpolate between the marker points $(Z_i, R_i)$, using cubic splines.
    \item Finding the {\bf initial guess} of the steady-state turgid configuration $(z_i, r_i)$: approximate the steady-state turgid configuration $(z_i, r_i)$ by displacing the marker points $(Z_i, R_i)$ according to $d\vec{X}/dt =\vec F $ where $\vec X$ is the vector concatenated by the $z$- and $r$-coordinates of all marker points. The components of the force vector $\vec{F}$ along $z$- and $r$-direction on each marker point are computed from Eq.(\ref{eqn:local-force-balance}) in the main text. In Eq.(\ref{eqn:local-force-balance}), the tensions $\sigma_s^i$ and $\sigma_\theta^i$ are computed through Eqs.(\ref{rel1}),(\ref{rel2}) and (\ref{stretches}) in the main text where the elastic stretch ratios  $\lambda_s^i$ and $\lambda_\theta^i$ are computed based on the cubic splines of $s({S})$, $r(s)$, and ${R}({S})$, which are twice continuously differentiable (Notice $z(s)$ is constrained by $r(s)$ by the geometric fact $(dz/ds)^2+(dr/ds)^2=1$, so it can be constructed from $r(s)$ ). The components of $\vec{F}$ related to the boundary marker points ($i=1$ and $i=N$) rely on Eqs.(\ref{bc1}) and (\ref{bc2}) in the main text respectively.  The points $(z_i, r_i)$ are updated until the force residue $|\vec F|<\epsilon$. In this work, we choose $\epsilon<10^{-3}$. Notice that for each iteration, the cubic-spline coefficients of $s({S})$ and $r(s)$ are updated while ${R}({S})$ maintains the same.  
    \item Finding {\bf the solution of the mechanical equilibrium} turgid configuration $(z_i, r_i)$: based on the initial guess of $(z_i, r_i)$, use Newton's method to solve iteratively for the configuration of the points $(z_i, r_i)$ which leaves machine-zero force residue $\epsilon_{machine}$. Again, for each iteration, the cubic-spline coefficients are updated. Given both unturgid and turgid coordinates, we can solve elastic stretch ratios  $\lambda_s, \lambda_\theta$ by Eq.(\ref{stretches}) in the main text.
    
  \item For the {\bf cell wall extension}, we update the unturgid wall data $S_i$, $R_i$ for each marker point using the directional extension distributions according to Eqs.~(\ref{eq1}) and (\ref{strain-promoted}) by a linear approximation ${R}_{new} = {R}_{old} + \Delta t \gamma_\theta {R}_{old}$, and similarly for ${S}$, where $\Delta t$ is a small time step. We repeat step 1-4 at each time step.  
\end{enumerate}

For the inference of the exocytosis $\gamma$ over the cell outline, we only need to perform step 1-3 to solve for $\lambda_s, \lambda_\theta$, and then use directly Eq.(\ref{eqn:self-similarity-condition}) in the main text only requiring numerical integration.

\section{Supplemental figures}
Here we provide Fig.\ref{fig:more} in supplemental to Fig.\ref{fig:2}.
\begin{figure}[h!]
    \centering
            \includegraphics[width=0.9 \linewidth]{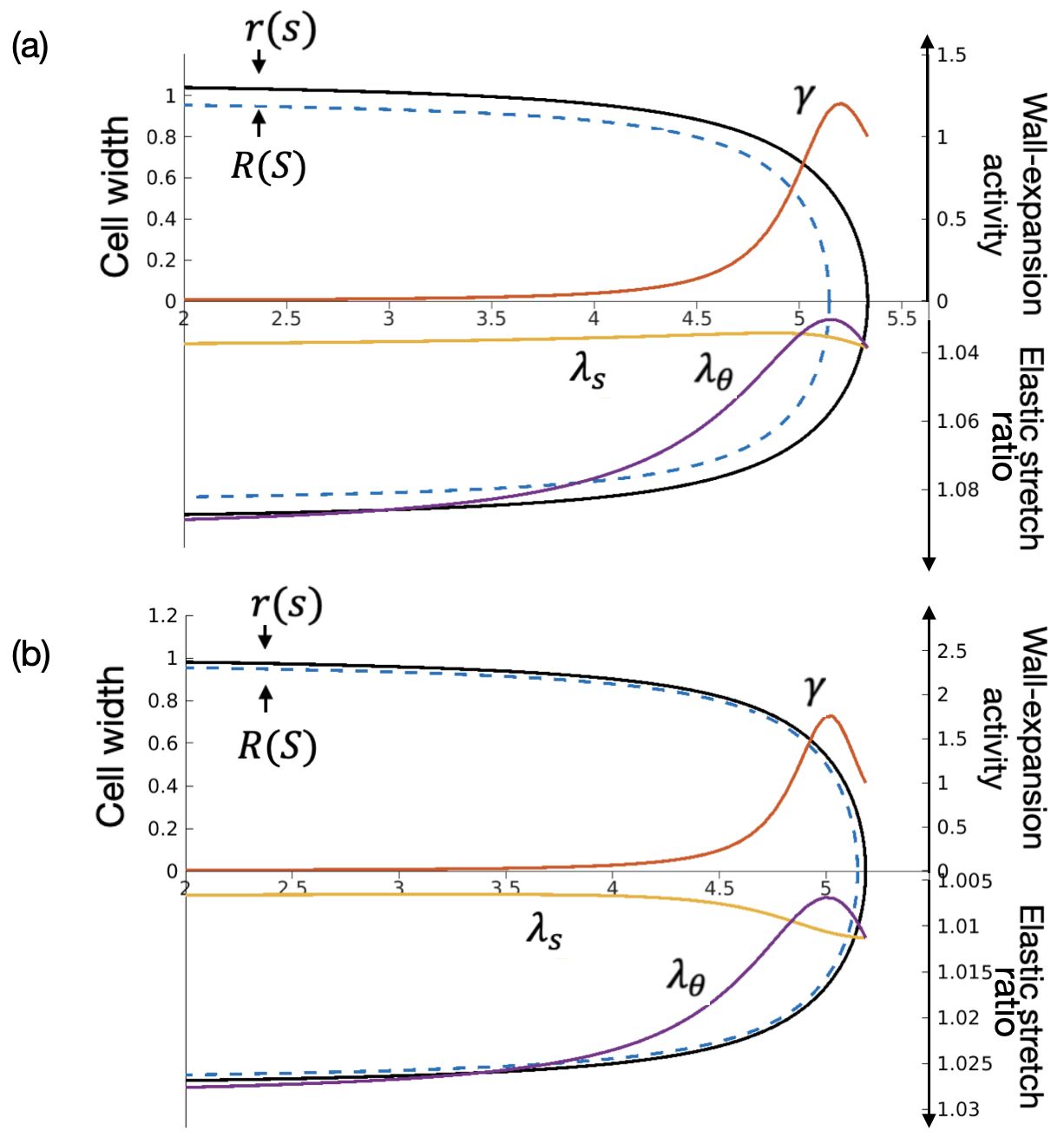}
    \caption{Sensitivity study of the inferred $\gamma$ given different material properties. (a) Deformation and inferred $\gamma$ for the same unturgid shape as Fig.\ref{fig:2}b in the main text, with softer material ($K_h = \mu_h = 6$).(b) Deformation and inferred $\gamma$ for the same unturgid shape as Fig.\ref{fig:2}b in the main text, under linear elasticity Eq.(\ref{linear_law})  with $\nu = 0.3$ and $E_h = 30$. }
        \label{fig:more}
\end{figure}

%See Fig. S \ref{fig:hyphoid-assortment-aggregate} for the change of profiles of secretion $\gamma$, elastic stretches $\lambda_s$ and $\lambda_\theta$, curvatures $\kappa_s$ and $\kappa_\theta$, and wall extension rates $\dot{\epsilon}_s$ and $\dot{\epsilon}_\theta$ as the cell shape transitions from being tapered to flat at the tip end.  We normalize the rates ($\gamma$, $\dot{\epsilon}_s$, and $\dot{\epsilon}_\theta$) by the rate maximum and normalize curvatures ($\kappa_s$ and $\kappa_\theta$) by $1/r(s_b)$, the inverse of the maximum radial width of the cells. In each of our simulations, $r(s_b)$ is approximately equal to $1$.

% Bibliography
\bibliographystyle{elsarticle-harv}
%\raggedright 
\bibliography{refs}

\begin{thebibliography}{37}
\expandafter\ifx\csname natexlab\endcsname\relax\def\natexlab#1{#1}\fi
\providecommand{\url}[1]{\texttt{#1}}
\providecommand{\href}[2]{#2}
\providecommand{\path}[1]{#1}
\providecommand{\DOIprefix}{doi:}
\providecommand{\ArXivprefix}{arXiv:}
\providecommand{\URLprefix}{URL: }
\providecommand{\Pubmedprefix}{pmid:}
\providecommand{\doi}[1]{\href{http://dx.doi.org/#1}{\path{#1}}}
\providecommand{\Pubmed}[1]{\href{pmid:#1}{\path{#1}}}
\providecommand{\bibinfo}[2]{#2}
\ifx\xfnm\relax \def\xfnm[#1]{\unskip,\space#1}\fi
%Type = Article
\bibitem[{Abenza et~al.(2015)Abenza, Couturier, Dodgson, Dickmann, Chessel,
  Dumais and Salas}]{fission-yeast}
\bibinfo{author}{Abenza, J.F.}, \bibinfo{author}{Couturier, E.},
  \bibinfo{author}{Dodgson, J.}, \bibinfo{author}{Dickmann, J.},
  \bibinfo{author}{Chessel, A.}, \bibinfo{author}{Dumais, J.},
  \bibinfo{author}{Salas, R.E.C.}, \bibinfo{year}{2015}.
\newblock \bibinfo{title}{Wall mechanics and exocytosis define the shape of
  growth domains in fission yeast}.
\newblock \bibinfo{journal}{Nature communications} \bibinfo{volume}{6},
  \bibinfo{pages}{8400}.
\newblock \URLprefix \url{https://europepmc.org/articles/PMC4618311},
  \DOIprefix\doi{10.1038/ncomms9400}.
%Type = Article
\bibitem[{Bartnicki-Garcia et~al.(2000)Bartnicki-Garcia, Bracker, Gierz,
  L{\'o}pez-Franco and Lu}]{bartnicki2000mapping}
\bibinfo{author}{Bartnicki-Garcia, S.}, \bibinfo{author}{Bracker, C.E.},
  \bibinfo{author}{Gierz, G.}, \bibinfo{author}{L{\'o}pez-Franco, R.},
  \bibinfo{author}{Lu, H.}, \bibinfo{year}{2000}.
\newblock \bibinfo{title}{Mapping the growth of fungal hyphae: orthogonal cell
  wall expansion during tip growth and the role of turgor}.
\newblock \bibinfo{journal}{Biophysical Journal} \bibinfo{volume}{79},
  \bibinfo{pages}{2382--2390}.
%Type = Article
\bibitem[{Bartnicki-Garcia et~al.(1989)Bartnicki-Garcia, Hergert and
  Gierz}]{BartnickiGarcia1989}
\bibinfo{author}{Bartnicki-Garcia, S.}, \bibinfo{author}{Hergert, F.},
  \bibinfo{author}{Gierz, G.}, \bibinfo{year}{1989}.
\newblock \bibinfo{title}{Computer simulation of fungal morphogenesis and the
  mathematical basis for hyphal (tip) growth}.
\newblock \bibinfo{journal}{Protoplasma} \bibinfo{volume}{153},
  \bibinfo{pages}{46--57}.
\newblock \URLprefix \url{https://doi.org/10.1007/bf01322464},
  \DOIprefix\doi{10.1007/bf01322464}.
%Type = Article
\bibitem[{Bibeau et~al.(2021)Bibeau, Galotto, Wu, T{\"u}zel and
  Vidali}]{bibeau2021quantitative}
\bibinfo{author}{Bibeau, J.P.}, \bibinfo{author}{Galotto, G.},
  \bibinfo{author}{Wu, M.}, \bibinfo{author}{T{\"u}zel, E.},
  \bibinfo{author}{Vidali, L.}, \bibinfo{year}{2021}.
\newblock \bibinfo{title}{Quantitative cell biology of tip growth in moss}.
\newblock \bibinfo{journal}{Plant Molecular Biology} , \bibinfo{pages}{1--18}.
%Type = Article
\bibitem[{Boudaoud(2003)}]{boudaoud2003growth}
\bibinfo{author}{Boudaoud, A.}, \bibinfo{year}{2003}.
\newblock \bibinfo{title}{Growth of walled cells: from shells to vesicles}.
\newblock \bibinfo{journal}{Physical review letters} \bibinfo{volume}{91},
  \bibinfo{pages}{018104}.
%Type = Article
\bibitem[{Bove et~al.(2008)Bove, Vaillancourt, Kroeger, Hepler, Wiseman and
  Geitmann}]{bove2008magnitude}
\bibinfo{author}{Bove, J.}, \bibinfo{author}{Vaillancourt, B.},
  \bibinfo{author}{Kroeger, J.}, \bibinfo{author}{Hepler, P.K.},
  \bibinfo{author}{Wiseman, P.W.}, \bibinfo{author}{Geitmann, A.},
  \bibinfo{year}{2008}.
\newblock \bibinfo{title}{Magnitude and direction of vesicle dynamics in
  growing pollen tubes using spatiotemporal image correlation spectroscopy and
  fluorescence recovery after photobleaching}.
\newblock \bibinfo{journal}{Plant physiology} \bibinfo{volume}{147},
  \bibinfo{pages}{1646--1658}.
%Type = Article
\bibitem[{Camp{\`{a}}s and Mahadevan(2009)}]{Camps2009}
\bibinfo{author}{Camp{\`{a}}s, O.}, \bibinfo{author}{Mahadevan, L.},
  \bibinfo{year}{2009}.
\newblock \bibinfo{title}{Shape and dynamics of tip-growing cells}.
\newblock \bibinfo{journal}{Current Biology} \bibinfo{volume}{19},
  \bibinfo{pages}{2102--2107}.
\newblock \URLprefix \url{https://doi.org/10.1016/j.cub.2009.10.075},
  \DOIprefix\doi{10.1016/j.cub.2009.10.075}.
%Type = Article
\bibitem[{Camp{\`{a}}s et~al.(2012)Camp{\`{a}}s, Rojas, Dumais and
  Mahadevan}]{Camps2012}
\bibinfo{author}{Camp{\`{a}}s, O.}, \bibinfo{author}{Rojas, E.},
  \bibinfo{author}{Dumais, J.}, \bibinfo{author}{Mahadevan, L.},
  \bibinfo{year}{2012}.
\newblock \bibinfo{title}{Strategies for cell shape control in tip-growing
  cells}.
\newblock \bibinfo{journal}{American Journal of Botany} \bibinfo{volume}{99},
  \bibinfo{pages}{1577--1582}.
\newblock \URLprefix \url{https://doi.org/10.3732/ajb.1200087},
  \DOIprefix\doi{10.3732/ajb.1200087}.
%Type = Article
\bibitem[{Castle(1958)}]{castle1958topography}
\bibinfo{author}{Castle, E.S.}, \bibinfo{year}{1958}.
\newblock \bibinfo{title}{The topography of tip growth in a plant cell}.
\newblock \bibinfo{journal}{The Journal of general physiology}
  \bibinfo{volume}{41}, \bibinfo{pages}{913--926}.
%Type = Article
\bibitem[{Chelladurai et~al.(2020)Chelladurai, Galotto, Petitto, Vidali and
  Wu}]{tension-distrubution-chelladurai}
\bibinfo{author}{Chelladurai, D.}, \bibinfo{author}{Galotto, G.},
  \bibinfo{author}{Petitto, J.}, \bibinfo{author}{Vidali, L.},
  \bibinfo{author}{Wu, M.}, \bibinfo{year}{2020}.
\newblock \bibinfo{title}{Inferring lateral tension distribution in wall
  structures of single cells}.
\newblock \bibinfo{journal}{European physical journal plus}
  \bibinfo{volume}{135}.
%Type = Article
\bibitem[{Chevalier et~al.(2023)Chevalier, Pinar, Le~Borgne, Durieu,
  Pe{\~n}alva, Boudaoud and Minc}]{chevalier2023cell}
\bibinfo{author}{Chevalier, L.}, \bibinfo{author}{Pinar, M.},
  \bibinfo{author}{Le~Borgne, R.}, \bibinfo{author}{Durieu, C.},
  \bibinfo{author}{Pe{\~n}alva, M.A.}, \bibinfo{author}{Boudaoud, A.},
  \bibinfo{author}{Minc, N.}, \bibinfo{year}{2023}.
\newblock \bibinfo{title}{Cell wall dynamics stabilize tip growth in a
  filamentous fungus}.
\newblock \bibinfo{journal}{PLoS Biology} \bibinfo{volume}{21},
  \bibinfo{pages}{e3001981}.
%Type = Article
\bibitem[{Cosgrove(2000)}]{cosgrove2000loosening}
\bibinfo{author}{Cosgrove, D.J.}, \bibinfo{year}{2000}.
\newblock \bibinfo{title}{Loosening of plant cell walls by expansins}.
\newblock \bibinfo{journal}{Nature} \bibinfo{volume}{407},
  \bibinfo{pages}{321--326}.
%Type = Article
\bibitem[{Cosgrove(2005)}]{cosgrove-2005}
\bibinfo{author}{Cosgrove, D.J.}, \bibinfo{year}{2005}.
\newblock \bibinfo{title}{Growth of the plant cell wall}.
\newblock \bibinfo{journal}{Nature Reviews Molecular Cell Biology}
  \bibinfo{volume}{6}, \bibinfo{pages}{850--861}.
\newblock \URLprefix \url{https://doi.org/10.1038/nrm1746},
  \DOIprefix\doi{10.1038/nrm1746}.
%Type = Article
\bibitem[{Cosgrove(2024)}]{cosgrove2024plant}
\bibinfo{author}{Cosgrove, D.J.}, \bibinfo{year}{2024}.
\newblock \bibinfo{title}{Plant cell wall loosening by expansins}.
\newblock \bibinfo{journal}{Annual Review of Cell and Developmental Biology}
  \bibinfo{volume}{40}.
%Type = Article
\bibitem[{Dav{\`{\i}} et~al.(2019)Dav{\`{\i}}, Chevalier, Guo, Tanimoto,
  Barrett, Couturier, Boudaoud and Minc}]{davi-2019}
\bibinfo{author}{Dav{\`{\i}}, V.}, \bibinfo{author}{Chevalier, L.},
  \bibinfo{author}{Guo, H.}, \bibinfo{author}{Tanimoto, H.},
  \bibinfo{author}{Barrett, K.}, \bibinfo{author}{Couturier, E.},
  \bibinfo{author}{Boudaoud, A.}, \bibinfo{author}{Minc, N.},
  \bibinfo{year}{2019}.
\newblock \bibinfo{title}{Systematic mapping of cell wall mechanics in the
  regulation of cell morphogenesis}.
\newblock \bibinfo{journal}{Proceedings of the National Academy of Sciences}
  \bibinfo{volume}{116}, \bibinfo{pages}{13833--13838}.
\newblock \URLprefix \url{https://doi.org/10.1073/pnas.1820455116},
  \DOIprefix\doi{10.1073/pnas.1820455116}.
%Type = Article
\bibitem[{Deng et~al.(2022)Deng, Wei, Xu, Vidali and Wu}]{deng2022inferring}
\bibinfo{author}{Deng, Y.}, \bibinfo{author}{Wei, C.}, \bibinfo{author}{Xu,
  R.}, \bibinfo{author}{Vidali, L.}, \bibinfo{author}{Wu, M.},
  \bibinfo{year}{2022}.
\newblock \bibinfo{title}{Inferring relative surface elastic moduli in
  thin-wall models of single cells}.
\newblock \bibinfo{journal}{The European Physical Journal Plus}
  \bibinfo{volume}{137}, \bibinfo{pages}{880}.
%Type = Article
\bibitem[{Drake and Vavylonis(2013)}]{DrakeVavylonis}
\bibinfo{author}{Drake, T.}, \bibinfo{author}{Vavylonis, D.},
  \bibinfo{year}{2013}.
\newblock \bibinfo{title}{Model of fission yeast cell shape driven by
  membrane-bound growth factors and the cytoskeleton}.
\newblock \bibinfo{journal}{PLOS Computational Biology} \bibinfo{volume}{9},
  \bibinfo{pages}{1--17}.
\newblock \URLprefix \url{https://doi.org/10.1371/journal.pcbi.1003287},
  \DOIprefix\doi{10.1371/journal.pcbi.1003287}.
%Type = Article
\bibitem[{Dumais(2021)}]{dumais2021mechanics}
\bibinfo{author}{Dumais, J.}, \bibinfo{year}{2021}.
\newblock \bibinfo{title}{Mechanics and hydraulics of pollen tube growth}.
\newblock \bibinfo{journal}{New Phytologist} \bibinfo{volume}{232},
  \bibinfo{pages}{1549--1565}.
%Type = Article
\bibitem[{Dumais et~al.(2004)Dumais, Long and Shaw}]{dumais-2004}
\bibinfo{author}{Dumais, J.}, \bibinfo{author}{Long, S.R.},
  \bibinfo{author}{Shaw, S.L.}, \bibinfo{year}{2004}.
\newblock \bibinfo{title}{{The Mechanics of Surface Expansion Anisotropy in
  Medicago truncatula Root Hairs}}.
\newblock \bibinfo{journal}{Plant Physiology} \bibinfo{volume}{136},
  \bibinfo{pages}{3266--3275}.
\newblock \URLprefix \url{https://doi.org/10.1104/pp.104.043752},
  \DOIprefix\doi{10.1104/pp.104.043752}.
%Type = Article
\bibitem[{Dumais et~al.(2006)Dumais, Shaw, Steele, Long and Ray}]{dumais-2006}
\bibinfo{author}{Dumais, J.}, \bibinfo{author}{Shaw, S.L.},
  \bibinfo{author}{Steele, C.R.}, \bibinfo{author}{Long, S.R.},
  \bibinfo{author}{Ray, P.M.}, \bibinfo{year}{2006}.
\newblock \bibinfo{title}{An anisotropic-viscoplastic model of plant cell
  morphogenesis by tip growth}.
\newblock \bibinfo{journal}{The International Journal of Developmental Biology}
  \bibinfo{volume}{50}, \bibinfo{pages}{209--222}.
\newblock \URLprefix \url{https://doi.org/10.1387/ijdb.052066jd},
  \DOIprefix\doi{10.1387/ijdb.052066jd}.
%Type = Article
\bibitem[{Fayant et~al.(2010)Fayant, Girlanda, Chebli, Aubin, Villemure and
  Geitmann}]{Fayant2010}
\bibinfo{author}{Fayant, P.}, \bibinfo{author}{Girlanda, O.},
  \bibinfo{author}{Chebli, Y.}, \bibinfo{author}{Aubin, C.{\'{E}}.},
  \bibinfo{author}{Villemure, I.}, \bibinfo{author}{Geitmann, A.},
  \bibinfo{year}{2010}.
\newblock \bibinfo{title}{Finite element model of polar growth in pollen
  tubes}.
\newblock \bibinfo{journal}{The Plant Cell} \bibinfo{volume}{22},
  \bibinfo{pages}{2579--2593}.
\newblock \URLprefix \url{https://doi.org/10.1105/tpc.110.075754},
  \DOIprefix\doi{10.1105/tpc.110.075754}.
%Type = Article
\bibitem[{Galway(2006)}]{galway2006root}
\bibinfo{author}{Galway, M.}, \bibinfo{year}{2006}.
\newblock \bibinfo{title}{Root hair cell walls: filling in the framework}.
\newblock \bibinfo{journal}{Botany} \bibinfo{volume}{84},
  \bibinfo{pages}{613--621}.
%Type = Article
\bibitem[{Goriely and Tabor(2003a)}]{goriely-tabor-jtb}
\bibinfo{author}{Goriely, A.}, \bibinfo{author}{Tabor, M.},
  \bibinfo{year}{2003}a.
\newblock \bibinfo{title}{Biomechanical models of hyphal growth in
  actinomycetes}.
\newblock \bibinfo{journal}{Journal of Theoretical Biology}
  \bibinfo{volume}{222}, \bibinfo{pages}{211 -- 218}.
\newblock \URLprefix
  \url{http://www.sciencedirect.com/science/article/pii/S0022519303000298},
  \DOIprefix\doi{https://doi.org/10.1016/S0022-5193(03)00029-8}.
%Type = Article
\bibitem[{Goriely and Tabor(2003b)}]{self-similar-tip-growth-goriely-tabor}
\bibinfo{author}{Goriely, A.}, \bibinfo{author}{Tabor, M.},
  \bibinfo{year}{2003}b.
\newblock \bibinfo{title}{Self-similar tip growth in filamentary organisms}.
\newblock \bibinfo{journal}{Phys. Rev. Lett.} \bibinfo{volume}{90},
  \bibinfo{pages}{108101}.
\newblock \URLprefix
  \url{https://link.aps.org/doi/10.1103/PhysRevLett.90.108101},
  \DOIprefix\doi{10.1103/PhysRevLett.90.108101}.
%Type = Article
\bibitem[{Hejnowicz et~al.(1977)Hejnowicz, Heinemann and
  Sievers}]{hejnowicz-1977}
\bibinfo{author}{Hejnowicz, Z.}, \bibinfo{author}{Heinemann, B.},
  \bibinfo{author}{Sievers, A.}, \bibinfo{year}{1977}.
\newblock \bibinfo{title}{Tip growth: Patterns of growth rate and stress in the
  chara rhizoid}.
\newblock \bibinfo{journal}{Zeitschrift f\"{u}r Pflanzenphysiologie}
  \bibinfo{volume}{81}, \bibinfo{pages}{409--424}.
\newblock \URLprefix \url{https://doi.org/10.1016/s0044-328x(77)80176-1},
  \DOIprefix\doi{10.1016/s0044-328x(77)80176-1}.
%Type = Article
\bibitem[{Hepler et~al.(2013)Hepler, Rounds and Winship}]{hepler2013control}
\bibinfo{author}{Hepler, P.K.}, \bibinfo{author}{Rounds, C.M.},
  \bibinfo{author}{Winship, L.J.}, \bibinfo{year}{2013}.
\newblock \bibinfo{title}{Control of cell wall extensibility during pollen tube
  growth}.
\newblock \bibinfo{journal}{Molecular Plant} \bibinfo{volume}{6},
  \bibinfo{pages}{998--1017}.
%Type = Article
\bibitem[{de~Jong et~al.(2020)de~Jong, Hulshof and Prokert}]{deJong2020}
\bibinfo{author}{de~Jong, T.}, \bibinfo{author}{Hulshof, J.},
  \bibinfo{author}{Prokert, G.}, \bibinfo{year}{2020}.
\newblock \bibinfo{title}{Modelling fungal hypha tip growth via viscous sheet
  approximation}.
\newblock \bibinfo{journal}{Journal of Theoretical Biology}
  \bibinfo{volume}{492}, \bibinfo{pages}{110189}.
\newblock \URLprefix \url{https://doi.org/10.1016/j.jtbi.2020.110189},
  \DOIprefix\doi{10.1016/j.jtbi.2020.110189}.
%Type = Article
\bibitem[{Lockhart(1965)}]{Lockhart1965}
\bibinfo{author}{Lockhart, J.A.}, \bibinfo{year}{1965}.
\newblock \bibinfo{title}{An analysis of irreversible plant cell elongation}.
\newblock \bibinfo{journal}{Journal of Theoretical Biology}
  \bibinfo{volume}{8}, \bibinfo{pages}{264--275}.
\newblock \URLprefix
  \url{https://www.sciencedirect.com/science/article/pii/0022519365900779},
  \DOIprefix\doi{https://doi.org/10.1016/0022-5193(65)90077-9}.
%Type = Article
\bibitem[{Luo et~al.(2016)Luo, Yan and Yang}]{luo2016measuring}
\bibinfo{author}{Luo, N.}, \bibinfo{author}{Yan, A.}, \bibinfo{author}{Yang,
  Z.}, \bibinfo{year}{2016}.
\newblock \bibinfo{title}{Measuring exocytosis rate using corrected
  fluorescence recovery after photoconversion}.
\newblock \bibinfo{journal}{Traffic} \bibinfo{volume}{17},
  \bibinfo{pages}{554--564}.
%Type = Article
\bibitem[{Ohairwe et~al.(2024)Ohairwe, {\v{Z}}ivanovi{\'c} and
  Rojas}]{ohairwe2024fitness}
\bibinfo{author}{Ohairwe, M.E.}, \bibinfo{author}{{\v{Z}}ivanovi{\'c}, B.D.},
  \bibinfo{author}{Rojas, E.R.}, \bibinfo{year}{2024}.
\newblock \bibinfo{title}{A fitness landscape instability governs the
  morphological diversity of tip-growing cells}.
\newblock \bibinfo{journal}{Cell Reports} \bibinfo{volume}{43}.
%Type = Book
\bibitem[{Pozrikidis et~al.(1992)}]{pozrikidis1992boundary}
\bibinfo{author}{Pozrikidis, C.}, et~al., \bibinfo{year}{1992}.
\newblock \bibinfo{title}{Boundary integral and singularity methods for
  linearized viscous flow}.
\newblock \bibinfo{publisher}{Cambridge University Press}.
%Type = Article
\bibitem[{Reinhardt(1892)}]{Reinhardt}
\bibinfo{author}{Reinhardt, M.}, \bibinfo{year}{1892}.
\newblock \bibinfo{title}{Das wachsthum der pilzhyphen}.
\newblock \bibinfo{journal}{Jahrb\"ucher f\"ur wissenschaftliche botanik}
  \bibinfo{volume}{23}, \bibinfo{pages}{1479--566}.
%Type = Article
\bibitem[{Rodriguez et~al.(1994)Rodriguez, Hoger and
  McCulloch}]{rodriguez1994stress}
\bibinfo{author}{Rodriguez, E.K.}, \bibinfo{author}{Hoger, A.},
  \bibinfo{author}{McCulloch, A.D.}, \bibinfo{year}{1994}.
\newblock \bibinfo{title}{Stress-dependent finite growth in soft elastic
  tissues}.
\newblock \bibinfo{journal}{Journal of biomechanics} \bibinfo{volume}{27},
  \bibinfo{pages}{455--467}.
%Type = Article
\bibitem[{Rojas et~al.(2011)Rojas, Hotton and Dumais}]{rojas-pollen-tube}
\bibinfo{author}{Rojas, E.}, \bibinfo{author}{Hotton, S.},
  \bibinfo{author}{Dumais, J.}, \bibinfo{year}{2011}.
\newblock \bibinfo{title}{Chemically mediated mechanical expansion of the
  pollen tube cell wall}.
\newblock \bibinfo{journal}{Biophysical Journal} \bibinfo{volume}{101},
  \bibinfo{pages}{1844--1853}.
\newblock \URLprefix
  \url{https://www.sciencedirect.com/science/article/pii/S0006349511009623},
  \DOIprefix\doi{https://doi.org/10.1016/j.bpj.2011.08.016}.
%Type = Article
\bibitem[{Shaw et~al.(2000)Shaw, Dumais and Long}]{shaw2000cell}
\bibinfo{author}{Shaw, S.L.}, \bibinfo{author}{Dumais, J.},
  \bibinfo{author}{Long, S.R.}, \bibinfo{year}{2000}.
\newblock \bibinfo{title}{Cell surface expansion in polarly growing root hairs
  of medicago truncatula}.
\newblock \bibinfo{journal}{Plant Physiology} \bibinfo{volume}{124},
  \bibinfo{pages}{959--970}.
%Type = Article
\bibitem[{Skalak et~al.(1996)Skalak, Zargaryan, Jain, Netti and
  Hoger}]{skalak1996compatibility}
\bibinfo{author}{Skalak, R.}, \bibinfo{author}{Zargaryan, S.},
  \bibinfo{author}{Jain, R.K.}, \bibinfo{author}{Netti, P.A.},
  \bibinfo{author}{Hoger, A.}, \bibinfo{year}{1996}.
\newblock \bibinfo{title}{Compatibility and the genesis of residual stress by
  volumetric growth}.
\newblock \bibinfo{journal}{Journal of mathematical biology}
  \bibinfo{volume}{34}, \bibinfo{pages}{889--914}.
%Type = Article
\bibitem[{Zonia and Munnik(2008)}]{zonia2008vesicle}
\bibinfo{author}{Zonia, L.}, \bibinfo{author}{Munnik, T.},
  \bibinfo{year}{2008}.
\newblock \bibinfo{title}{Vesicle trafficking dynamics and visualization of
  zones of exocytosis and endocytosis in tobacco pollen tubes}.
\newblock \bibinfo{journal}{Journal of experimental botany}
  \bibinfo{volume}{59}, \bibinfo{pages}{861--873}.

\end{thebibliography}

\end{document}